\documentclass[11pt,draftcls,onecolumn]{IEEEtran}
\usepackage{graphicx}
\usepackage{cite}

\setlength{\topmargin}{-0.5in}
\usepackage{chemarrow}
\usepackage{color}
\usepackage{algorithmic}
\usepackage{algorithm}
\usepackage{clrscode}
\usepackage{amsmath, amssymb, mathrsfs, amsfonts, amsthm}
\usepackage{epsfig, epstopdf}

\usepackage{caption}
\usepackage{subcaption}
\usepackage{stfloats}
\usepackage{enumerate}

\theoremstyle{theorem}
\newtheorem{theorem}{Theorem}
\newtheorem{lemma}{Lemma}

\theoremstyle{remark}

\newcommand{\E}{\mathbb{E}}


\date{}
\title{On the Capacity of Point-to-Point and Multiple-Access Molecular Communications with Ligand-Receptors\IEEEauthorrefmark{1} \footnote{\IEEEauthorrefmark{1}This paper has been presented in part at the IEEE International Symposium on Information Theory (ISIT), Wan Chai, Hong Kong, June 2015.}}

\author{\IEEEauthorblockN{Gholamali Aminian\IEEEauthorrefmark{2}, Maryam Farahnak Ghazani\IEEEauthorrefmark{2},
Mahtab Mirmohseni\IEEEauthorrefmark{2}, \\ Masoumeh Nasiri Kenari\IEEEauthorrefmark{2}, and Faramarz Fekri\IEEEauthorrefmark{3}}\\
\IEEEauthorblockA{\IEEEauthorrefmark{2}Sharif University of Technology, \IEEEauthorrefmark{3}Georgia Institute of Technology}}

\begin{document}
\maketitle
\begin{abstract}
In this paper, we consider the bacterial point-to-point and multiple-access molecular communications with ligand-receptors. For the point-to-point communication, we investigate common signaling methods, namely the Level Scenario (LS), which uses one type of a molecule with different concentration levels, and the Type Scenario (TS), which employs multiple types of molecules with a single concentration level. We investigate the trade-offs between the two scenarios from the capacity point of view. We derive an upper bound on the capacity using a Binomial Channel (BIC) model and the symmetrized Kullback-Leibler (KL) divergence. A lower bound is also derived when the environment noise is negligible. For the TS, we also consider the effect of blocking of a receptor by a different molecule type. Then, we consider multiple-access communications, for which we investigate three scenarios based on molecule and receptor types, i.e., same types of molecules with Different Labeling and Same types of Receptors (DLSR), Different types of Molecules and Receptors (DMDR), and Same types of Molecules and Receptors (SMSR). We investigate the trade-offs among the three scenarios from the total capacity point of view. We derive some inner bounds on the capacity region of these scenarios when the environment noise is negligible.
\end{abstract}

\section{Introduction}
Molecular communication (MC) has stimulated a great deal of interest because of its potential broad applications. There are different mechanisms for MC, among which diffusion is the most favorable, as it does not require any prior infrastructure. In diffusion-based systems, information might be encoded into the concentration, type, or releasing time of the molecules. For instance, in \cite{Airfler2011}, an on-off keying modulation is proposed where molecules are released only when the information bit is one. It is shown that if there is no interference from the previous transmission slots, the channel can be modeled by a Z-channel. In \cite{kuran2011 ,tepekule2014}, new modulation techniques based on multiple types of molecules are presented.
Two models for diffusion-based channels have been proposed, namely small and large scales. Diffusion process is viewed as a probabilistic Brownian motion in the small scale model, whereas it is described by deterministic differential equations in the large scale model. In this paper, we concentrate on the \emph{large scale} model which reflects the average effects of diffusion. However, to derive the large-scale diffusion capacity of MC, one has to deal with the reception process at the receiver side. Two reception models are considered for a passive receiver. The first model is a perfect absorber where the receiver absorbs the hitting molecule. The second model, which is more realistic, is the ligand-receptor binding receiver, where the hitting molecule is absorbed by the receptor with some binding probability, \cite{fekri2, atakan2014molecular}. The randomness in ligand-receptor binding process is modeled in \cite{pierobon2011noise} and a closed form solution for this modeling is derived by using Markov chains. Ligand-receptors are modeled by a Markov chain in \cite{fekri2}, by a discrete-time Markov model in \cite{CapacityN2}, and by a BIC for a bacterial colony in \cite{fekri1}. The BIC is defined by $P(y|x)={n \choose y} x^y (1-x)^{n-y}$ where the input is $x\in [0,1]$, the output is $y \in \lbrace 0,1,\ldots,n \rbrace$ and $n$, the number of trials, is a given natural number. Average and peak constraints on the input $x$ may exist. The capacity of this channel without average and peak constraints, for large values of $n$, behaves as $\frac{1}{2}\log{\frac{n}{2\pi e}}+\log{\pi}$ \cite{xie1997minimax}.
However, there is no explicit upper or lower bound on the capacity of the BIC when $n$ is not large enough. An algorithm for computing the capacity of the BIC was presented in \cite{komninakis2001capacity} using convex optimization methods.

On the other hand, the bacteria based multiple-access communications have been studied in \cite{atakan2008molecular,atakan2009single,liu2013molecular} for diffusion channel and ligand-receptor, where the transmitters use binary on-off keying modulations employing the same type of molecules but with different labeling. In these papers, the capacity of the multiple-access channel (MAC) is simply computed as the sum of the capacity of the channels between each transmitter and the receiver. In \cite{atakan2008molecular}, the expected concentration of bound molecules is computed. Then, by approximating the number of delivered molecules as the normally distributed random variable, the maximum detection probability is calculated, and based on the result, by modeling each user channel as a binary symmetric channel, the capacity of the channel is computed. In \cite{atakan2009single}, the channel randomness effect has been modeled by adding an additive Gaussian noise to the concentration of bound molecules. Then, by using the Gaussian channel model approximation, the capacity of each user channel is derived. In \cite{liu2013molecular}, the capacity of each user channel is computed by representing the diffusion channel as a binary test channel. In all these works, the average interference from the other transmitters is taken into account in calculating the binding probability.
In this paper, however in contrast to the previous works, we examine the instantaneous effect of the multiple-access interference instead of its average value. In the following, we first concentrate on a point-to-point molecular communication and evaluate its capacity and the upper and lower bounds. Then we consider three multiple-access scenarios and for each, we evaluate the capacity region and some inner bounds.

Our main contributions are as follows:
\begin{itemize}
\item \textbf{Point-to-Point Communication:} We investigate the trade-offs between two bacterial point-to-point communication scenarios for ligand-receptors with fixed total number of molecules and receptors: 
(a) multi-type molecular communication with a single concentration level, and (b) single-type molecular communication with multiple concentration levels. At the first glance, scenario (a) introduces new degrees of freedom and reduces the intersymbol interference (ISI). However, since the number of molecules per type (the power per type) reduces increasing the number of types, we should examine the benefit of using different types of molecules. To make the comparison between scenario (a) and (b), we adopt the model of \cite{fekri1} in this work. In addition, a Markov model for the interactions between different types of molecules near the receptor is presented and the capacity for this model is computed numerically.
\item \textbf{Upper and Lower Bounds for the BIC Capacity:} Using KL divergence bound of \cite{aminian}, we derive an upper bound on the capacity of the point-to-point BIC model under given average and peak constraints on the channel input (Theorem \ref{theorem1}). Based on numerical evidence, we believe that this upper bound works well in the low SNR regime (which can occur in MC systems). A lower bound is derived on the point-to-point BIC capacity under average and peak constraints in the case of no environment noise in Lemma~\ref{lemma1}.
\item \textbf{Multiple-Access Communication:} We investigate the trade-offs among three multiple-access bacterial communication scenarios for ligand-receptors with fixed total number of receptors: (a) Using the same molecule type with different labeling for different transmitters and one receptor type at the receiver (DLSR), (b) Using different molecule types for different transmitters and different receptor types at the receiver (DMDR), and (c) Using the same molecule type for the transmitters and one receptor type at the receiver (SMSR). Scenarios (a) and (c) share the receptors and introduce a new degree of freedom. However, the benefit of using different types of molecules in scenario (b) should be examined. Scenario (a) has also the advantage that the transmitters use a self-identifying label and therefore seems to have better performance than scenarios (b) and (c). To compare the three scenarios, we compute their total capacities numerically. By assuming two transmitters in Section \ref{innerboundsmac}, we derive some inner bounds on the capacity region of the three scenarios under average and peak constraints in the case of no environment noise.
\end{itemize}
All logarithms are in base $e$ in this paper. This paper is organized as follows: in Section \ref{sec:model}, we present the system model for point-to-point communication scenarios, whose capacities are discussed in Section \ref{Capacity analysis}. The interaction of molecules near the receptor is modeled in Subsection \ref{sec:block}. In Subsection \ref{sec:cap_upper}, a new upper bound on the capacity of the BIC is presented by considering peak and average constraints. Subsection \ref{sec:cap_lower} includes a lower bound on the capacity of the BIC by extending the Z-channel. In Section \ref{macnetwork}, three scenarios for multiple-access communication are presented, whose capacity regions and total capacities are discussed in Section \ref{capacityregionmac}. The achievable rates for these scenarios are provided in Subsection \ref{innerboundsmac}. Section \ref{Simulation} includes the numerical results, and finally concluding remarks are given in Section \ref{conclusion}.

\section{Point-to-Point System Model}\label{sec:model}
In this section, we describe two bacterial point-to-point communication scenarios with ligand-receptors.

\begin{figure}
\centering
\begin{subfigure}[b]{0.5\textwidth}
\centering
\includegraphics[scale=0.53]{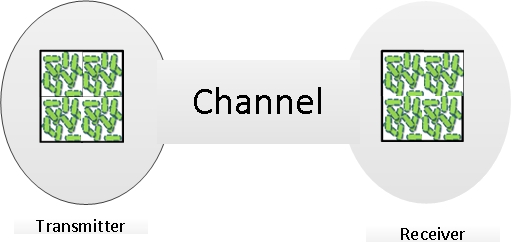}
\caption{Level scenario (LS)}
\label{fig1a}
\end{subfigure}%
\begin{subfigure}[b]{0.5\textwidth}
\centering
\includegraphics[scale=0.53]{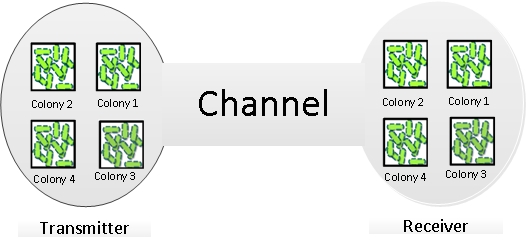}
\caption{Type scenario (TS) }
\label{fig1b}
\end{subfigure}
\caption{Two scenarios: LS and TS}
\label{fig1}
\vspace{-1em}
\end{figure}
\textbf{Level Scenario (LS):} Here, the transmitter encodes information at multiple concentration levels to create the codewords. At the transmitter and the receiver, there is only one colony with $n$ bacteria where each bacteria has $N$ receptors; i.e., $nN$ receptors in total. All these $n$ bacteria produce just one type of molecule. This scenario is shown in Fig.~\ref{fig1a}.

\textbf{Type Scenario (TS):} This scenario uses multiple types of molecules at the transmitter and the receiver. We assume the same total number of $n$ bacteria (as in LS) are available which are equally divided into $m$ colonies at both the transmitter and receiver as shown in Fig.~\ref{fig1b}. As such, each colony has $n/m$ bacteria. Moreover, different colonies at the transmitter produce different types of AHL molecules. Furthermore, the colonies are synchronized at the transmitter. Similar to the LS scenario, each bacteria has $N$ receptors. Therefore, there are ${nN}/{m}$ receptors in total per each colony, i.e., each type of molecule. Each colony can detect its own molecule type, and as a result, produces different color Fluorescent Proteins (e.g., GFP, YFP, ...) which are used by the receiver to decode the received signal. In addition, we assume that all receptors of a colony are independent and sense a common molecule concentration.

In both scenarios, we assume that there is no intersymbol interference (ISI). In other words, we assume those molecules, who do not bind to the receptors in the current time slot, will be degraded to the next time slot and hence will not interfere with molecules from the next time slot. This assumption, together with the large-scale diffusion channel property, results in a linear channel. For simplicity, we further assume that no attenuation occurs in the channel. Therefore, the received concentration $A_r$ is equal to the transmitted concentration $A_s$. At the receiver with ligand-receptors, the probability of binding at the steady state is given by \cite{fekri1}
\begin{align}\label{bind}
p_b=\frac{A_s}{A_s+\frac{\kappa}{\gamma}},
\end{align}
where $\gamma$ is the input gain and $\kappa$ is the dissociation rate of trapped molecules in the cell receptors. If we consider an environment noise with concentration $A_{ne}$, due to the molecules of the same type from other sources, the probability of binding becomes $p_b=\frac{A_s+A_{ne}}{A_s+A_{ne}+\frac{\kappa}{\gamma}}$.

In LS, we only have one type of molecule and its binding probability is equal to
\begin{align}\label{bindLS}
p_b^{LS}=\frac{X+A_{ne}^{LS}}{X+A_{ne}^{LS}+\frac{\kappa}{\gamma}},
\end{align}
where $X$ is the received concentration at the receiver and $A_{ne}^{LS}$ is the concentration of the environment noise.
We can view the LS scenario as a BIC as follows:
\begin{align}\label{BinomialchannelLS}
\centering
&P^{LS}(Y=y|X=x)= {nN \choose y} f_{p_b}^y(x+A_{ne}^{LS}) \left( 1-f_{p_b}(x+A_{ne}^{LS})\right)^{nN-y}, \\\nonumber
&f_{p_b}: [0,\infty] \rightarrow [0,1], \quad y \in \lbrace 0,1, ..., nN \rbrace.
\end{align}
The function $f_{p_b}(.)$ is the binding probability function. From \eqref{bindLS}, we have $f_{p_b}(X+A_{ne})=\frac{X+A_{ne}}{X+A_{ne}+\frac{\kappa}{\gamma}}$. As such, the function $f_{p_b}(.)$ is an increasing and concave function.

In TS, we have different types of molecules. Here, we assume that the binding processes of different molecule types are independent and every receptor binds to its own molecule type and two different types do not bind to one receptor. We investigate a more general model in Subsection~\ref{sec:block} by taking into account the interaction of different types of molecules in TS. The probability of binding for the $i$th type of molecule is given by
\begin{align}\label{bindTS}
p_{b_i}^{TS}=\frac{X_i+A_{ne_i}^{TS}}{X_i+A_{ne_i}^{TS}+\frac{\kappa_i}{\gamma_i}},
\end{align}
where $X_i$ is the received concentration of the $i$th type of molecule and $A_{ne_i}^{TS}$ is the concentration of the environment noise for the $i$th type of molecule. Without loss of generality, we assume $A_{ne_i}^{TS}=A_{ne}^{TS}$ and the same $\gamma$ and $\kappa$ for all types of molecules and receptors. This scenario can be viewed as $m$ orthogonal BICs as follows:
\begin{align}\label{BinomialchannelTS}
\centering
&P_i^{TS}(Y_i=y_i|X_i=x_i)={\frac{nN}{m} \choose y_i} f_{p_b}^{y_i}(x_i+ A_{ne_i}^{TS}) \left(1-f_{p_b}(x_i+A_{ne_i}^{TS})\right)^{\frac{nN}{m}-y_i}, \qquad i=1,...,m,\\\nonumber
&f_{p_b}: [0,\infty] \rightarrow [0,1], \quad y_i \in \lbrace 0,1, ..., \frac{nN}{m} \rbrace.
\end{align}


\subsection{Blocking of Receptors}\label{sec:block}
In the TS scenario, we assumed orthogonal parallel channels for different types of molecules with no interference between them (i.e., no blocking of a receptor by molecules of another type). However, when there are different types of molecules, they may interfere with each other. In other words, one type of molecule may block another type of molecule from binding to its receptor counterpart. For example, consider $m=2$ with two types of molecule, $A$ and $B$ and their corresponding receptors as $R_A$ and $R_B$. The molecule type $A$ near $R_B$ may prevent the molecule type $B$ from binding to $R_B$ and vice versa. Assume that $X_A$ and $X_B$ are the received concentrations of types $A$ and $B$. The main reaction kinetics, for binding of the molecule type $B$ to its receptor, is modeled as \cite{atakan2014molecular}
\begin{align}\label{kinetic}
X_B+R_B \underset{\kappa_{B}}{\overset {\gamma_{B}}{\rightleftharpoons}} XR_B,
\end{align}
where $\gamma_{B}\geq 0$ is the association rate of the molecule type $B$ with receptors of type $B$ and $\kappa_{B} \geq 0$ is the dissociation rate of $XR_B$ complex. Now, we characterize the blocking for the receptor type $B$, similar to the reaction kinetics formulas by
\begin{align}\label{blockingeq}
X_A+R_B \xrightarrow{\gamma_{B}^{Block,A}} R_B^{Block,A},\quad
X_A+R_B \xleftarrow{\kappa_{B}^{Block,A}} R_B^{Block,A},
\end{align}
where $\gamma_{B}^{Block,A} \geq 0$ is the blocking rate of $R_B$ by molecule type $A$ and $\kappa_{B}^{Block,A}$ is the unblocking rate of $R_B^{Block,A}$. If we do not take the blocking into account, then we have a reaction kinetics for each type of receptor to its molecule type. As in \cite{atakan2014molecular}, we define a Markov model for the no blocking case based on \eqref{kinetic}, as shown in Fig.~\ref{fig2a} for $m=2$. Likewise, according to \eqref{blockingeq}, we propose a Markov model for the blocking case, as shown in Fig.~\ref{fig2b}.
\begin{figure}
\quad
\begin{subfigure}[b]{0.42\textwidth}
\centering
\includegraphics[trim={0 0 0 0},scale=0.43]{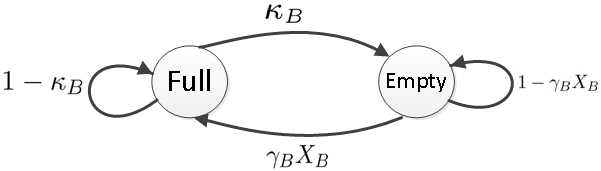}
\caption{With no blocking }
\label{fig2a}
\end{subfigure}\quad \qquad
\begin{subfigure}[b]{0.42\textwidth}
\centering
\includegraphics[trim={2.2cm 0 0 0},scale=0.4]{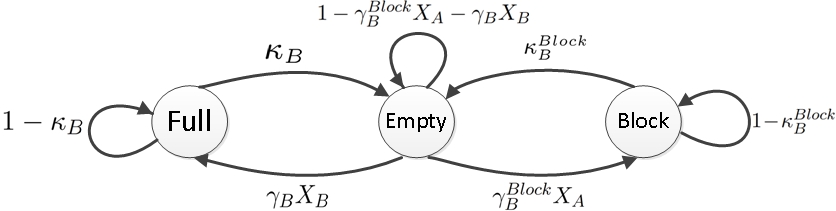}
\caption{With blocking}
\label{fig2b}
\end{subfigure}
\caption{Two Markov models for receptor type $B$}
\label{fig2}
\vspace{-1em}
\end{figure}
We consider three states. The full state is when the receptor binds to its type, the empty state when the receptor is free, and the block state when the receptor is blocked with a different molecule type.
Solving the chain for the no blocking case, the steady state behaviour of the system-reaction formula is obtained as \eqref{bind}. Solving the chain for the blocking case, we have the following probabilities of binding and blocking for the receptor type $B$:
\begin{align}\label{Blockequa}
&p_{b}=p_{Full}=\frac{\frac{\gamma_B}{\kappa_B} X_B}{\frac{\gamma_B}{\kappa_B} X_B + \frac{\gamma_B^{Block,A}}{\kappa_B^{Block,A}} X_A+1},\quad p_{Block}=\frac{\frac{\gamma_B^{Block,A}}{\kappa_B^{Block,A}} X_A}{\frac{\gamma_B}{\kappa_B} X_B + \frac{\gamma_B^{Block,A}}{\kappa_B^{Block,A}} X_A+1}.
\end{align}
If we increase the concentration for one type of molecule, the probability of binding for another type is decreased as expected. This model can be extended for $m>2$ via,
\begin{align}\label{blockbind}
&p_{b_i}=p_{Full_i}=\frac{\frac{\gamma_i}{\kappa_i} X_i}{\frac{\gamma_i}{\kappa_i} X_i + \sum_{j=1,j \neq i}^m \frac{\gamma_i^{Block,j}}{\kappa_i^{Block,j}}X_j + 1 },\quad p_{Block_i}=\frac{\sum_{j=1,j \neq i}^m \frac{\gamma_i^{Block,j}}{\kappa_i^{Block,j}}X_j}{\frac{\gamma_i}{\kappa_i} X_i + \sum_{j=1,j \neq i}^m \frac{\gamma_i^{Block,j}}{\kappa_i^{Block,j}}X_j + 1 }.
\end{align}
where $p_{b_i}$ and $p_{Block_i}$ are the binding probability of the $i$th type of receptor to the molecules of its type and the blocking probability of the $i$th type of receptor by the molecules of the other types, respectively. The blocking and unblocking rates for the $i$th type of receptor by the molecules of the $j$th type are defined by $\gamma_i^{Block,j}$ and $\kappa_i^{Block,j}$, respectively. It is also possible to consider the environment noise for the binding and blocking probabilities. Hence, the probability of binding for the $i$th type of molecule is given by
\begin{align}
&p_{b_i}^{TS,B}=\frac{\frac{\gamma_i}{\kappa_i} (X_i+A_{ne_i}^{TS})}{\frac{\gamma_i}{\kappa_i} (X_i+A_{ne_i}^{TS}) + \sum_{j=1,j \neq i}^m \frac{\gamma_i^{Block,j}}{\kappa_i^{Block,j}}(X_j+A_{ne_j}^{TS}) + 1 },
\end{align}
We can view the TS scenario with blocking as a multi-input multi-output BIC as follows:
\begin{align}\label{Mimobino}
\nonumber
&P_i^{TS,B}(Y_i=y_i|X_1=x_1, ..., X_m=x_m)\\
&\quad={\frac{nN}{m} \choose y_i} f_{p_{b_i}}^{y_i}(x_1, ..., x_m, A_{ne_i}^{TS}) \left(1-f_{p_{b_i}}(x_1, ..., x_m, A_{ne_i}^{TS})\right)^{\frac{nN}{m}-y_i}, \qquad i=1,...,m,
\end{align}
where $f_{p_{b_i}}(X_1,...,X_m,A_{ne_i}^{TS})=p_{b_i}^{TS,B}$ is the probability of binding when the blocking is considered.

\section{Point-to-point capacity analysis}
\label{Capacity analysis}
We investigate the capacity for the two scenarios. In both scenarios, the output is discrete. Further, we assume the environment noise and average and peak concentration level constraints.

In LS, we have a single colony with input $X$ and output $Y$. The peak and average concentration constraints for the input are $0 \leq X \leq A_s$ and $\E[X]\leq \alpha_s A_s$, respectively.

Then, we obtain the capacity for LS as
\begin{align}\label{LScap}
C_{LS}=\max_{\substack{P(x):\\ 0 \leq X\leq A_s,~ \E[X] \leq \alpha_s A_s}}I(X; Y), \qquad Y \in \lbrace 0,1, ..., nN \rbrace.
\end{align}

In TS, we use $X_i$ to denote the input of the $i$th colony to the channel and $Y_i$ to denote the output of the $i$th colony at the receiver. The peak and average concentration constraints for the input of the $i$th colony are $0 \leq X_i \leq \frac{A_s}{m}$ and $\E[X_i] \leq \alpha_s \frac{A_s}{m}$, respectively.

Hence, the capacity can be written as
\begin{align}\label{TScap}
&C_{TS}=\max_{\substack{P(x_1,x_2,...,x_m):\\ 0 \leq X_i\leq \frac{A_s}{m},~ \E[X_i] \leq \alpha_s \frac{A_s}{m}}} I(X_1,...,X_m;Y_1,...,Y_m), \qquad Y_i \in \lbrace 0,1, ..., \frac{nN}{m} \rbrace.
\end{align}
If we do not consider the blocking, the capacity could be obtaind as follows:
\begin{align}\label{TScap}
C_{TS} =m\times \max_{\substack{P(x_i):\\ 0 \leq X_i\leq \frac{A_s}{m},~ \E[X_i] \leq \alpha_s \frac{A_s}{m} } }I(X_i; Y_i),\qquad Y_i \in \lbrace 0,1, ..., \frac{nN}{m} \rbrace.
\end{align}

For a fair comparison of $C_{LS}$ with $C_{TS}$, we consider $A_{ne}^{LS}=A_{ne}^{TS}=A_{ne}$. Since we have a BIC in LS and $m$ BICs in TS with no blocking, we consider a BIC for the two scenarios as follows:
\begin{align}\label{binomchan}
&P(Y=y|X=x)= {N^\prime \choose y} f_{p_b}^y(x+A_{ne}) \left(1-f_{p_b}(x+A_{ne})\right)^{N^\prime-y},\\\nonumber
&f_{p_b}: [0,\infty] \rightarrow [0,1], \quad y \in \lbrace 0,1, ..., N^\prime \rbrace.
\end{align}
Since $P(y|x)$ is a Binomial distribution, we have $\sum_y yP(y|x)=N^\prime f_{p_b}(x)$.
The peak and average constraints for the input of the BIC are $0 \leq X \leq A_s^\prime$ and $\E[X]\leq \alpha_s A_s^\prime$, respectively. Note that for LS and TS we have the following parameters:
\begin{itemize}
\item \textbf{LS:} $N^\prime=nN$ and $A_s^\prime=A_s$.
\item \textbf{TS with no blocking:} $N^\prime=\frac{nN}{m}$ and $A_s^\prime=\frac{A_s}{m}$.
\end{itemize}
\subsection{Capacity Upper Bound}\label{sec:cap_upper}
There is no closed form for the BIC capacity. As such, for the first time, we propose an upper bound on the capacity of the BIC at the low SNR regime by considering average and peak constraints using the symmetrized KL divergence, referred as KL upper bound in \cite{aminian}. We first explain the KL upper bound briefly. Let $D_{\mathsf{sym}}(p\|q)=D(p\|q)+D(q\|p)$. Then,
\begin{align}\label{KL}
\mathcal{U}(P(y|x))&=\max_{P(x)}D_{\mathsf{sym}}(P(x,y)\|P(x)P(y))\geq \max_{P(x)}I(X;Y)=C(P(y|x)).
\end{align}
The KL $\mathcal{U}(P(y|x))$ is always an upper bound on the capacity. It is straightforward to show that
\begin{align}\label{KL1}
D_{\mathsf{sym}}\left(P(x,y)\|P(x)P(y)\right)=\E_{P(x,y)}\log P(Y|X) - \E_{P(x)P(y)}\log P(Y|X).
\end{align}
Now, we state our upper bound in the following theorem. The proof of this theorem can be found in Appendix \ref{AppendixProoftheorem1}.
\begin{theorem}\label{theorem1}
Consider a point-to-point BIC as \eqref{binomchan} and any input probability mass function (p.m.f) $P(x)$. Then, the symmetrized KL divergence upper bound has the following explicit formula:
\begin{align}
I(X;Y)&\leq \mathcal{U}(P(x,y))=N^\prime \mathsf{Cov} {\left(f_{p_b}(X+A_{ne}), \log \left( \frac{f_{p_b}(X+A_{ne})}{1-f_{p_b}(X+A_{ne})}\right)\right)},
\end{align}
where $\mathsf{Cov}(X,Y)=\E[XY]-\E[X]\E[Y]$. Furthermore, imposing the average intensity constraint $\alpha_s A_s^{\prime}$ and peak constraint $A_s^{\prime}$, we get
\begin{align}\label{upperbound}
\nonumber
\mathcal{U}_{\mathsf{Binomial}}(P(y|x))&:=\max_{\substack{P(x):\\ 0 \leq X \leq A_s^\prime,~ \E[X]=\alpha_s A_s^\prime}}\mathcal{U}(P(x,y))\\
&=N^\prime\begin{cases}\frac{f_{p_b}(\alpha_s A_s^\prime+A_{ne})}{f_{p_b}(A_s^\prime+A_{ne})}\left[f_{p_b}(A_s^\prime+A_{ne})-f_{p_b}(\alpha_s A_s^\prime+A_{ne})\right]E,&\textit{if}\quad (*),
\\
\frac{f_{p_b}(A_s^\prime+A_{ne})}{4} E,&\textit{if}\quad (**),
\end{cases}
\end{align}
where $E=\log{\left(\frac{f_{p_b}(A_s^\prime+A_{ne})(1-f_{p_b}(A_{ne}))}{f_{p_b}(A_{ne})(1-f_{p_b}(A_s^\prime+A_{ne}))}\right)}$, $(*):f_{p_b}(\alpha_s A_s^\prime+A_{ne})< \frac{f_{p_b}(A_s^\prime+A_{ne})}{2}$, and $(**):f_{p_b}(\alpha_s A_s^\prime+A_{ne})\geq \frac{f_{p_b}(A_s^\prime+A_{ne})}{2}$.
Hence,
\begin{align}
C=\max_{\substack{P(x):\\ 0 \leq X \leq A_s^\prime,~ \E[X]=\alpha_s A_s^\prime}}I(X;Y)\leq \mathcal{U}_{\mathsf{Binomial}}(P(y|x)).
\end{align}
\end{theorem}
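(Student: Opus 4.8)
The plan is to establish the two assertions of the theorem in turn: first the closed-form identity for $\mathcal{U}(P(x,y))$, and then the evaluation of the constrained maximum $\mathcal{U}_{\mathsf{Binomial}}$.

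For the identity I would substitute the binomial log-likelihood $\log P(y\mid x)=\log\binom{N^\prime}{y}+y\log f_{p_b}(x+A_{ne})+(N^\prime-y)\log(1-f_{p_b}(x+A_{ne}))$ into the right-hand side of \eqref{KL1}. Abbreviating $t(X):=f_{p_b}(X+A_{ne})$, both expectations in \eqref{KL1} are linear in $y$ for fixed $x$, so I would collapse the inner sums using the binomial mean $\sum_y y\,P(y\mid x)=N^\prime t(x)$; in the joint term this couples to $x$, whereas in the product term it is replaced by the unconditional mean $\E[Y]=N^\prime\E[t(X)]$. The terms $\log\binom{N^\prime}{y}$ carry no dependence on $x$ and therefore cancel in the difference. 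The remainder regroups as $N^\prime[\mathsf{Cov}(t(X),\log t(X))+\mathsf{Cov}(1-t(X),\log(1-t(X)))]$, and since adding a constant leaves a covariance unchanged, $\mathsf{Cov}(1-t(X),\cdot)=-\mathsf{Cov}(t(X),\cdot)$, which merges the two terms into $N^\prime\mathsf{Cov}(t(X),\log\frac{t(X)}{1-t(X)})$. I expect this step to be purely mechanical.

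The core of the proof is the maximization. Writing $W:=t(X)=f_{p_b}(X+A_{ne})$ and letting $\phi(w):=\log\frac{w}{1-w}$ be the logit, the objective is $N^\prime\mathsf{Cov}(W,\phi(W))$, with $W$ confined to $[f_{p_b}(A_{ne}),f_{p_b}(A_s^\prime+A_{ne})]$ because $f_{p_b}$ is increasing. The first task is to show that an optimal input is supported on at most two points. The feasible set of input laws on $[0,A_s^\prime]$ with prescribed mean $\alpha_s A_s^\prime$ is convex and weak-$*$ compact, and its extreme points are measures supported on at most two points; but since $\mathsf{Cov}(W,\phi(W))$ is \emph{not} globally convex in the law, I cannot simply invoke ``a convex objective is maximized at an extreme point.'' Instead I would linearize at a putative maximizer $P^*$: writing $m_t=\E_{P^*}[W]$ and $m_\ell=\E_{P^*}[\phi(W)]$, stationarity of the covariance forces $P^*$ to maximize the \emph{linear} functional $\E[W\phi(W)-m_\ell W-m_t\phi(W)]$ over the feasible set, and a linear functional under a single mean constraint is maximized by a two-point law. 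Making this reduction rigorous, and then locating the optimal support at the peak-constraint boundary (where the logit range of $\phi(W)$ equals $E$), is the main obstacle.

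Finally I would solve the resulting one-parameter problem. The average constraint enters through Jensen's inequality: concavity and monotonicity of $f_{p_b}$ give $\E[W]=\E[f_{p_b}(X+A_{ne})]\le f_{p_b}(\E[X]+A_{ne})=f_{p_b}(\alpha_s A_s^\prime+A_{ne})$. For a binary input the covariance equals $\lambda(1-\lambda)$ times a fixed product of spreads, which after elementary simplification at the peak-boundary values I would bound by $\lambda(1-\lambda)\,f_{p_b}(A_s^\prime+A_{ne})\,E$, while the Jensen bound caps the weight on the upper level at $\lambda\le f_{p_b}(\alpha_s A_s^\prime+A_{ne})/f_{p_b}(A_s^\prime+A_{ne})$. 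Since $\lambda(1-\lambda)$ peaks at the vertex $\lambda=\tfrac{1}{2}$, two regimes arise: if the cap is at least $\tfrac{1}{2}$, i.e. $f_{p_b}(\alpha_s A_s^\prime+A_{ne})\ge\tfrac{1}{2}f_{p_b}(A_s^\prime+A_{ne})$, the vertex is admissible and yields case $(**)$; otherwise the constraint binds at $\lambda=f_{p_b}(\alpha_s A_s^\prime+A_{ne})/f_{p_b}(A_s^\prime+A_{ne})$ and yields case $(*)$. This translation of the average-power constraint into the admissible range of $\lambda$, via the concavity of $f_{p_b}$, is the delicate bookkeeping that produces the precise two-case expression.
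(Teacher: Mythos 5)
Your opening and closing steps are right and essentially coincide with the paper's: the covariance identity is exactly the paper's computation (the $\log\binom{N^\prime}{y}$ terms cancel, the binomial mean collapses the $y$-sums, and the two covariances merge through $\mathsf{Cov}(1-t,\cdot)=-\mathsf{Cov}(t,\cdot)$), and your endgame --- Jensen's inequality turning $\E[X]\le\alpha_s A_s^\prime$ into the cap $\lambda\le f_{p_b}(\alpha_s A_s^\prime+A_{ne})/f_{p_b}(A_s^\prime+A_{ne})$ on the weight of the upper level, followed by maximizing $\lambda(1-\lambda)$ to produce cases $(*)$ and $(**)$ --- is precisely the bookkeeping the paper performs. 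If anything your version is tidier, since you obtain the stated expression as an honest bound via $(w_1-w_0)\le w_1$, where the paper silently replaces the spread $f_{p_b}(A_s^\prime+A_{ne})-f_{p_b}(A_{ne})$ by $f_{p_b}(A_s^\prime+A_{ne})$.

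The genuine gap is exactly where you suspect it, and your proposed repair does not close it. First-order optimality at a maximizer $P^*$ of $J(P)=\mathsf{Cov}_P(W,\phi(W))$ does show that $P^*$ maximizes the linear functional $L(Q)=\E_Q\!\left[W\phi(W)-m_\ell W-m_t\phi(W)\right]$ over the feasible set, and $L$ does admit a two-point maximizer $Q^*$; but $L(Q^*)\ge L(P^*)$ only gives $J(Q^*)\ge J(P^*)-\left(\E_{Q^*}[W]-m_t\right)\left(\E_{Q^*}[\phi(W)]-m_\ell\right)$, and because $W$ and $\phi(W)$ are comonotone the correction term has the wrong sign, so you cannot conclude that the two-point law does at least as well on the covariance itself. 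The paper does not prove this step either --- it asserts the optimal binary law and delegates the argument to \cite[Appendix C]{aminian} --- so you would need to either import that argument or replace it. A self-contained replacement that avoids the two-point reduction entirely: set $\mu=\E[W]$ and $d=\E[(W-\mu)^+]=\E[(\mu-W)^+]$; since $\phi$ is increasing, $\mathsf{Cov}(W,\phi(W))=\E[(W-\mu)\phi(W)]\le \phi(w_1)\E[(W-\mu)^+]-\phi(w_0)\E[(\mu-W)^+]=dE$, and for any $W\in[0,w_1]$ one has $d\le\min\{(w_1-\mu)q,\ \mu(1-q)\}\le \mu(w_1-\mu)/w_1$ with $q=P(W>\mu)$; combining this with $\mu\le f_{p_b}(\alpha_s A_s^\prime+A_{ne})$ and the monotonicity of $\mu\mapsto\mu(w_1-\mu)/w_1$ on $[0,w_1/2]$ yields both cases of \eqref{upperbound} for every feasible input law, which is all the theorem needs.
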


We compute this KL upper bound numerically in Section~\ref{Simulation}. Based on the numerical evidence, this upper bound works well for all Binomial channels (such as MC channels) with low capacity.
\subsection{Capacity Lower Bound}\label{sec:cap_lower}
We obtain a lower bound on the capacity of the BIC when the environment noise is negligible. We assume a binary input, while in the previous section, a continuous input was assumed. Under this assumption, the resulted capacity is a lower bound on the capacity of the BIC. We compute a closed form formula for the lower bound in the following lemma.
\begin{lemma}\label{lemma1}
Consider a point-to-point BIC as \eqref{binomchan} and any input p.m.f $P(x)$, in which $A_{ne}=0$, $x \in \lbrace 0, A_s^\prime \rbrace$ and $\E[X] \leq \alpha_s A_s^\prime$. The capacity of this channel is obtained as
\begin{align}\label{Lowerbound}
C=\begin{cases} H\left(\frac{1}{1+e^{g(p_c)}}\right)-\frac{g(p_c)}{1+e^{g(p_c)}},\quad &\alpha_s \geq \frac{1}{1-p_c+e^{\frac{-p_c \log p_c}{1-p_c}}},
\\
f_I(\alpha_s,p_c), \quad &0<\alpha_s< \frac{1}{1-p_c+e^{\frac{-p_c \log p_c}{1-p_c}}}, 
\end{cases}
\end{align}
where $H(p)=-p\log p -(1-p) \log (1-p)$, $g(p)=\frac{H(p)}{1-p}$, $p_c=\left(\frac{\frac{\kappa}{\gamma}}{A_s^\prime+\frac{\kappa}{\gamma}}\right)^{N^\prime}$ and $f_I(\alpha,p)= -\alpha(1-p) \log{\alpha}+\alpha p \log{p}-(1-\alpha+\alpha p)\log{(1-\alpha+\alpha p)}.$
\end{lemma}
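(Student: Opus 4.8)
The plan is to exploit the degenerate behaviour of the binding function at zero input. Since $A_{ne}=0$ and $f_{p_b}(0)=\frac{0}{0+\kappa/\gamma}=0$, the input $X=0$ forces $Y=0$ with probability one, whereas $X=A_s^\prime$ produces a $\mathsf{Binomial}(N^\prime,q)$ output with $q=f_{p_b}(A_s^\prime)>0$. Thus every outcome $Y>0$ identifies the input $A_s^\prime$ unambiguously, and the binary-input channel behaves like a Z-channel once the output is collapsed.

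First I would argue that $Z:=\mathbf{1}[Y>0]$ is a sufficient statistic, so that $I(X;Y)=I(X;Z)$. This holds because conditioned on $Z=0$ the output is deterministically $Y=0$, while conditioned on $Z=1$ the input must be $A_s^\prime$; in either case $Y$ carries no information about $X$ beyond $Z$, so $X\!-\!Z\!-\!Y$ is Markov and data processing is tight. The induced Z-channel has ``$1\to 0$'' crossover probability $P(Y=0\mid X=A_s^\prime)=(1-q)^{N^\prime}=\big(\tfrac{\kappa/\gamma}{A_s^\prime+\kappa/\gamma}\big)^{N^\prime}=p_c$.

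Next I would parametrize the input by $\alpha=P(X=A_s^\prime)$, so the average constraint $\E[X]\le\alpha_s A_s^\prime$ becomes $\alpha\le\alpha_s$. A direct entropy computation gives $I(X;Z)=H\!\big(\alpha(1-p_c)\big)-\alpha H(p_c)$, a concave function of $\alpha$ that vanishes at $\alpha=0$. Setting its derivative to zero and using $H^\prime(p)=\log\frac{1-p}{p}$ yields the stationarity condition $\log\frac{1-\alpha(1-p_c)}{\alpha(1-p_c)}=g(p_c)$, whose solution satisfies $\alpha^\ast(1-p_c)=\frac{1}{1+e^{g(p_c)}}$. Substituting back, and noting $\alpha^\ast H(p_c)=\alpha^\ast(1-p_c)\,g(p_c)$, produces exactly the first branch $H\!\big(\tfrac{1}{1+e^{g(p_c)}}\big)-\tfrac{g(p_c)}{1+e^{g(p_c)}}$.

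The remaining work is bookkeeping. Using $g(p_c)=\frac{-p_c\log p_c}{1-p_c}-\log(1-p_c)$ one rewrites the interior optimizer as $\alpha^\ast=\frac{1}{(1-p_c)+e^{-p_c\log p_c/(1-p_c)}}$, which is precisely the threshold in the lemma; whenever $\alpha_s\ge\alpha^\ast$ the interior optimum is feasible and the first branch applies (one also checks $\alpha^\ast\le 1$, which reduces to $p_c\le 1$). When $\alpha_s<\alpha^\ast$, concavity forces the maximizer to the boundary $\alpha=\alpha_s$, and expanding $H\!\big(\alpha_s(1-p_c)\big)-\alpha_s H(p_c)$ — the $\alpha_s(1-p_c)\log(1-p_c)$ terms cancel — recovers $f_I(\alpha_s,p_c)$. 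I expect the main obstacle to be the clean justification of the sufficiency reduction to the Z-channel; the subsequent optimization and the algebraic matching of the threshold and of $f_I$ are routine but require care with the signs in the logarithms.
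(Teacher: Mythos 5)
Your proposal is correct and follows essentially the same route as the paper: parametrize by $\alpha=P(X=A_s^\prime)$, reduce the objective to $f_I(\alpha,p_c)$, find the interior stationary point $\alpha^\ast=\frac{1}{1-p_c+e^{-p_c\log p_c/(1-p_c)}}$, and use concavity in $\alpha$ to handle the average constraint by clipping to $\alpha_s$. The only difference is packaging: the paper obtains $f_I(\alpha,p_c)$ by directly expanding $H(Y)-H(Y|X)$ on the $(N^\prime+1)$-ary output and letting the $H(Y|x=A_s^\prime)$ terms cancel, whereas you first prove that $Z=\mathbf{1}[Y>0]$ is a sufficient statistic (so $I(X;Y)=I(X;Z)$ for the induced Z-channel) — a slightly cleaner justification of the same cancellation.
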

\begin{proof}
The proof is provided in Appendix \ref{AppendixProoflemma1}.
\end{proof}

If we consider $N^\prime=1$, then the channel would reduce to a Z-channel.
\section{Multiple-Access System Model}\label{macnetwork}
We describe three bacterial multiple-access communication scenarios with ligand-receptors based on molecule and receptor types differences. 

\begin{figure}
\begin{subfigure}{.33\linewidth}
\centering
\includegraphics[scale=0.4]{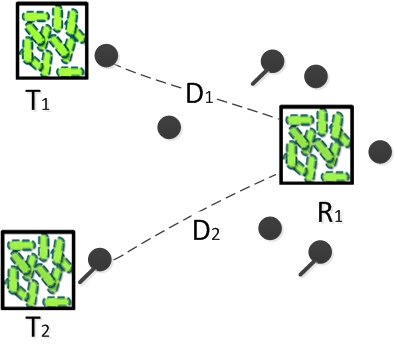}
\caption{DLSR scenario}
\label{fig3a}
\end{subfigure}%
\begin{subfigure}{.34\linewidth}
\centering
\includegraphics[scale=0.4]{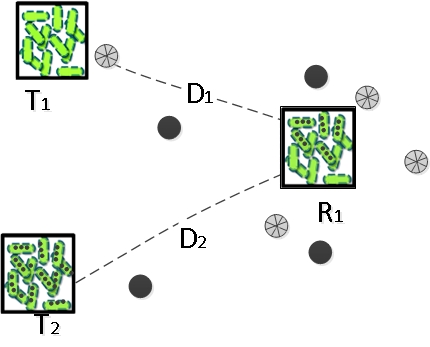}
\caption{DMDR scenario}
\label{fig3b}
\end{subfigure}%
\begin{subfigure}{0.33\linewidth}
\centering
\includegraphics[scale=0.4]{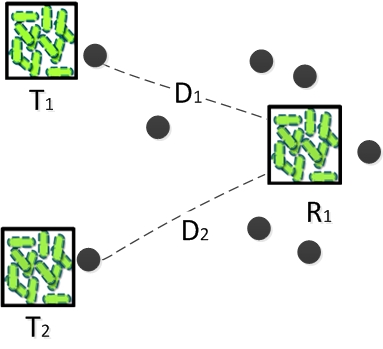}
\caption{SMSR scenario}
\label{fig3c}
\end{subfigure}
\caption{Three schemes of multiple-access in molecular communication systems}
\label{fig3}
\vspace{-1em}
\end{figure}
\textbf{DLSR Scenario:} As shown in Fig.~\ref{fig3a}, the transmitters send the same type of molecule (AHL) with different labelings and the receiver employs one type of bacteria (receptor). At the receiver, there is only one colony with $n$ bacteria where each bacteria has $N$ receptors; i.e., $nN$ receptors in total.

\textbf{DMDR Scenario:} As shown in Fig.~\ref{fig3b}, each transmitter uses a different type of bacteria and a different type of molecule (AHL) and the receiver employs different types of bacteria (receptor). At the receiver, there are $m$ different colonies with ${n}/{m}$ bacteria where each bacteria type has $N$ receptors; i.e., ${nN}/{m}$ receptors in total for the $i$th molecule type. 

\textbf{SMSR Scenario:} As shown in Fig.~\ref{fig3c}, the transmitters send the same type of molecule (AHL) and the receiver employs one type of bacteria (receptor). At the receiver, there is only one colony with $n$ bacteria, where each bacteria has $N$ receptors; i.e., $nN$ receptors in total.

In all scenarios, we assume that there is no intersymbol interference (ISI) and no attenuation occurs in the channel. Further, we assume that $X_i$ is the received concentration from the $i$th transmitter.

In the DLSR scenario, since different labelings are used\cite{atakan2008molecular,atakan2009single,liu2013molecular}, it is possible to distinguish between the molecules emitted from different transmitters. For example, consider $m=2$ with two different labelings of a molecule, $L_1$ and $L_2$. Assume that $X^{L_1}$ and $X^{L_2}$ are the received concentrations of the different labelings $L_1$ and $L_2$, respectively. The main reaction kinetics, for binding of the molecules with different labeling to the receptors, are modeled as
\begin{align}\label{eqkineticslabeling}
X^{L1}+R \underset{\kappa}{\overset {\gamma}{\rightleftharpoons}} XR^{L1},\quad
X^{L2}+R \underset{\kappa}{\overset {\gamma}{\rightleftharpoons}} XR^{L2},
\end{align}
where we consider the same association and dissociation rates for the two different labelings. Similar to the blocking case, we propose a Markov model for the labeling scenario, as shown in Fig.~\ref{fig4} for $m=2$.
\begin{figure}
\centering
\includegraphics[scale=0.45]{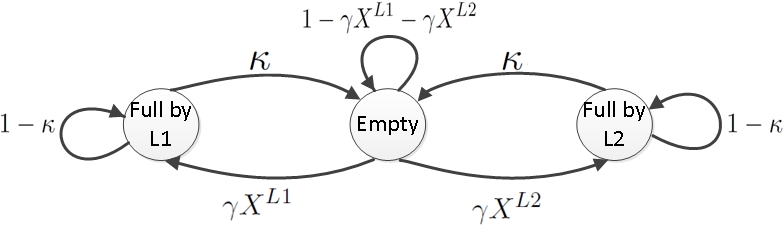}
\caption{Markov model for labeling}
\label{fig4}
\vspace{-1em}
\end{figure}
The steady state behaviour of the system-reaction formula is obtained as
\begin{align}
p_{b_1}=p_{Full~by~L_1}=\frac{X^{L_1}}{X^{L_1}+X^{L_2}+\frac{\kappa}{\gamma}},\quad
p_{b_2}=p_{Full~by~L_2}=\frac{X^{L_2}}{X^{L_1}+X^{L_2}+\frac{\kappa}{\gamma}}.
\end{align}
This model can be extended for $m>2$ via, 
\begin{align}
p_{b_i}=\frac{X^{L_i}}{\sum_{j=1}^{m}{X^{L_j}}+\frac{\kappa}{\gamma}},
\end{align}
where $p_{b_i}$ is the binding probability of the receptors to the molecules with the $i$th type of label.
It is also possible to consider the environment noise for the binding probabilities:
\begin{align}
p_{b_i}^{DLSR}=\frac{X_i+A_{ne_i}^{DLSR}}{{\sum_{j=1}^{m}(X_j+A_{ne_j}^{DLSR}})+\frac{\kappa}{\gamma}},
\end{align}
where $A_{ne_i}^{DLSR}$ is the concentration of the environment noise for the molecules with the $i$th type of label. Without loss of generality, we assume $A_{ne_i}^{DLSR}=A_{ne}^{DLSR}$. Let the output $Y_i$ be the number of receptors bound to the molecules with the $i$th type of label. The outputs have multinomial distribution with parameters $p_{b_1}^{DLSR}, ..., p_{b_m}^{DLSR}$:
\begin{align}
\nonumber
&P^{DLSR}{\left(y_1,...,y_m|x_1,...,x_m\right)}=P{\left(Y_1=y_1, ..., Y_m=y_m|X_1=x_1, ..., X_m=x_m\right)}\\
&\quad = \binom{nN}{y_1} ...\binom{nN-\sum_{i=1}^{m-1}{y_i}}{y_m}\left(p_{b_1}^{DLSR}\right)^{y_1} ...\left(p_{b_m}^{DLSR}\right)^{y_m} \left(1 -\sum_{i=1}^{m}{p_{b_i}^{DLSR}}\right)^{nN-\sum_{i=1}^{m}{y_i}}.
\end{align}

In the DMDR scenario, we have different molecule types for the transmitters. Without blocking, the binding probability for the $i$th type of molecule is obtained as
\begin{align}
p_{b_i}^{DMDR}=\frac{X_i+A_{ne_i}^{DMDR}}{X_i+A_{ne_i}^{DMDR}+\frac{\kappa_i}{\gamma_i}},
\end{align}
where $A_{ne_i}^{DMDR}$ is the concentration of the environment noise for the molecules of the $i$th type. Without loss of generality, we assume $A_{ne_i}^{DMDR}=A_{ne}^{DMDR}$. Let the output $Y_i$ be the number of receptors bound to the molecules of the $i$th type. Then, $Y_i \sim Binomial {\left(\frac{nN}{m}, p_{b_i}^{DMDR}\right)}$ and
\begin{align}
\nonumber
&P^{DMDR}{\left(y_1,...,y_m|x_1,...,x_m\right)}=P{\left(Y_1=y_1,...,Y_m=y_m|X_1=x_1,..., X_m=x_m\right)}\\
&\quad =\prod_{i=1}^{m}P{\left(Y_i=y_i|X_i=x_i\right)}=\prod_{i=1}^{m}{\binom{\frac{nN}{m}}{y_i} {\left(p_{b_i}^{DMDR}\right)}^{y_i}} {\left(1-p_{b_i}^{DMDR}\right)}^{\frac{nN}{m}-y_i}.
\end{align}
However, by considering the blocking, taking the same steps as deriving \eqref{blockbind}, we have the following binding probability for the $i$th type of molecule:
\begin{align}
p_{b_i}^{DMDR,B}=\frac{\frac{\gamma_i}{\kappa_i}(X_i+A_{ne_i}^{DMDR})}{\frac{\gamma_i}{\kappa_i}(X_i+A_{ne_i}^{DMDR})+\sum_{j=1,j \neq i}^{m}{\frac{\gamma_i^{Block,j}}{\kappa_i^{Block,j}}(X_j+A_{ne_j}^{DMDR})}+1},
\end{align}
Here, we have $Y_i \sim Binomial {\left(\frac{nN}{m}, p_{b_i}^{DMDR,B}\right)}$ and
\begin{align}
\nonumber
&P^{DMDR,B}{\left(y_1,...,y_m|x_1,...,x_m\right)}=P{\left(Y_1=y_1,...,Y_m=y_m|X_1=x_1,..., X_m=x_m\right)}\\
&\quad=\prod_{i=1}^{m}P{\left(Y_i=y_i|X_1=x_1,..., X_m=x_m\right)}
=\prod_{i=1}^{m}{\binom{\frac{nN}{m}}{y_i} {\left(p_{b_i}^{DMDR,B}\right)}^{y_i}} {\left(1-p_{b_i}^{DMDR,B}\right)}^{\frac{nN}{m}-y_i}.
\end{align}

In the SMSR scenario, we have one molecule type for the transmitters. The receiver senses the sum of the concentrations $X_i$. Hence, the probability of binding is equal to
\begin{align}
p_b^{SMSR}=\frac{\sum_{i=1}^{m}{X_i}+A_{ne}^{SMSR}}{\sum_{i=1}^{m}{X_i}+A_{ne}^{SMSR}+\frac{\kappa}{\gamma}},
\end{align}
where $A_{ne}^{SMSR}$ is the environment noise. Let the output $Y$ be the number of bound receptors. Then, $Y \sim Binomial{\left(nN , p_b^{SMSR}\right)}$ and
\begin{equation}
\begin{aligned}
P^{SMSR}{\left(y|x_1,...,x_m\right)}&=P{\left(Y=y|X_1=x_1,...,X_m=x_m\right)}=\binom{nN}{y} {\left(p_{b}^{SMSR}\right)}^{y} {\left(1-p_{b}^{SMSR}\right)}^{nN-y}.
\end{aligned}
\end{equation}

Table \ref{MACparameters} summarizes the variables defined in this section.
\begin{table} 
\centering
\caption{Variables of the multiple-access scenarios}
\begin{tabular}{l|l}
\hline \hline
Variable & Definition\\\hline 
$p_{b_i}^{DLSR}$ & $\frac{X_i+A_{ne_i}^{DLSR}}{{\sum_{j=1}^{m}(X_j+A_{ne_j}^{DLSR}})+\frac{\kappa}{\gamma}}$\\
$p_{b_i}^{DMDR}$ & $\frac{X_i+A_{ne_i}^{DMDR}}{X_i+A_{ne_i}^{DMDR}+\frac{\kappa_i}{\gamma_i}}$\\
$p_{b_i}^{DMDR,B}$ & $\frac{\frac{\gamma_i}{\kappa_i}(X_i+A_{ne_i}^{DMDR})}{\frac{\gamma_i}{\kappa_i}(X_i+A_{ne_i}^{DMDR})+\sum_{j=1,j \neq i}^{m}{\frac{\gamma_i^{Block,j}}{\kappa_i^{Block,j}}(X_j+A_{ne_j}^{DMDR})}+1}$\\ 
$p_b^{SMSR}$ & $\frac{\sum_{i=1}^{m}{X_i}+A_{ne}^{SMSR}}{\sum_{i=1}^{m}{X_i}+A_{ne}^{SMSR}+\frac{\kappa}{\gamma}}$\\
$P^{DLSR}(y_1,...,y_m|x_1,...,x_m)$ & $\binom{nN}{y_1} ...\binom{nN-\sum_{i=1}^{m-1}{y_i}}{y_m}\left(p_{b_1}^{DLSR}\right)^{y_1} ...\left(p_{b_m}^{DLSR}\right)^{y_m} \left(1-\sum_{i=1}^{m}{p_{b_i}^{DLSR}}\right)^{nN-\sum_{i=1}^{m}{y_i}}$\\ 
$P^{DMDR}(y_1,...,y_m|x_1,...,x_m)$ & $\prod_{i=1}^{m}{\binom{\frac{nN}{m}}{y_i} {\left(p_{b_i}^{DMDR}\right)}^{y_i}} {\left(1-p_{b_i}^{DMDR}\right)}^{\frac{nN}{m}-y_i}$\\
$P^{DMDR,B}(y_1,...,y_m|x_1,...,x_m)$ & $\prod_{i=1}^{m}{\binom{\frac{nN}{m}}{y_i} {\left(p_{b_i}^{DMDR,B}\right)}^{y_i}} {\left(1-p_{b_i}^{DMDR,B}\right)}^{\frac{nN}{m}-y_i}$\\
$P^{SMSR}(y|x_1,...,x_m)$ & $\binom{nN}{y} {\left(p_{b}^{SMSR}\right)}^{y} {\left(1-p_{b}^{SMSR}\right)}^{nN-y}$\\ 
\hline \hline
\end{tabular}
\label{MACparameters}
\end{table}

\section{Multiple-Access Capacity Region Analysis}\label{capacityregionmac}
In this section, we investigate the capacity region of the MAC for the three scenarios. In all scenarios, the output is discrete. Further, we assume the environment noise and consider peak and average concentration level constraints for the input of the $i$th transmitter as $0 \leq X_i \leq A_{s_i}$ and $\E[X_i]\leq \alpha_{s_i} A_{s_i}$.

The DMDR scenario with no blocking can be viewed as $m$ orthogonal point-to-point channels and the capacity of each channel can be computed according to Section \ref{Capacity analysis}. So here, we consider the blocking.

Since we have one receiver with $m$ outputs in the DLSR and DMDR scenarios, we may view these scenarios as SIMO (single transmit antenna and multiple receive antennas) MACs and compute the capacity region as the convex hull of rate tuples $(R_1,...,R_m)$ such that \cite{ElGamalKim}
\begin{align}\label{capacityregion_twoantennaMAC}
\sum_{i \in \mathcal{I}}{R_i} &\leq I(X(\mathcal{I});(Y_1,..,Y_m)|X(\mathcal{I}^c)) \qquad \forall\mathcal{I} \subseteq \{1,...,m \},
\end{align}
for some p.m.f $\prod_{i=1}^{k} P(x_i)$ that satisfies $0 \leq X_i\leq A_{s_i}$, $\E[X_i] \leq \alpha_{s_i} A_{s_i}$, $i=1,...,m$. The total capacity in these scenarios can be computed as follows:
\begin{align} \label{totalcapacity_MAC1}
C_{total}^{DLSR, DMDR}= \max_{\substack{P(x_1,x_2,...,x_m):\\ 0 \leq X_i\leq A_{s_i},~ \E[X_i] \leq \alpha_{s_i} A_{s_i},~i=1,...,m}}{I(X_1,...,X_m;Y_1,...,Y_m)}.
\end{align}

The SMSR scenario can be viewed as a SISO (single transmit antenna and single receive antenna) MAC. The capacity region of this channel is the convex hull of rate tuples $(R_1,...,R_m)$ such that \cite{ElGamalKim}
\begin{align} \label{capacityregion_MAC}
\sum_{i \in \mathcal{I}}{R_i} &\leq I(X(\mathcal{I});Y|X(\mathcal{I}^c)) \qquad \forall\mathcal{I} \subseteq \{1,...,m \},
\end{align}
for some p.m.f $\prod_{i=1}^{k} P(x_i)$ that satisfies $0 \leq X_i\leq A_{s_i}$, $\E[X_i] \leq \alpha_{s_i} A_{s_i}$, $i =1,...,m$. The total capacity in this scenario can be computed as follows:
\begin{align} \label{totalcapacity_MAC2}
C_{total}^{SMSR}=\max_{\substack{P(x_1,x_2,...,x_m):\\ 0 \leq X_i\leq A_{s_i},~ \E[X_i] \leq \alpha_{s_i} A_{s_i},~i= 1,...,m}}{I(X_1,...,X_m;Y)}.
\end{align}

There is no algorithm to compute the capacity region of the MAC numerically \cite{macreg}. Instead, the total capacities of the three scenarios are computed numerically in Sention \ref{Simulation}. We remark that the total capacity in the MAC is active and therefore it is sensible to compute it.

For a fair comparison of the total capacities, we consider $A_{ne}^{DLSR}=A_{ne}^{DMDR}=A_{ne}^{SMSR}=A_{ne}$.

\subsection{Capacity Region Inner Bounds}\label{innerboundsmac}
We consider two transmitters and obtain inner bounds on the capacity region of the multiple-access communication in the three scenarios when the environment noise is negligible. We assume a binary input to arrive at an inner bound, which is computed numerically in Section \ref{Simulation}.

\textbf{DLSR, DMDR:}
We may view the DLSR and DMDR scenarios as interference channels with full receiver cooperation. The capacity region of the interference channel is an inner bound on the capacity region of this channel. The time-division inner bound for an interference channel consists of all rate pairs $(R_1, R_2)$ such that
\begin{align}\label{timediv}
R_1 < k \ C_1,\quad
R_2 < (1-k) \ C_2,
\end{align}
for some $k \in [0,1]$, where $C_1$ and $C_2$ are the maximum achievable individual rates as follows \cite{ElGamalKim}:
\begin{align}
C_1=\max_{\substack{x_2,~ P(x_1)}} I(X_1;Y_1|X_2=x_2),\quad
C_2=\max_{\substack{x_1,~ P(x_2)}} I(X_2;Y_2|X_1=x_1).
\end{align}
This inner bound is computed in Lemma \ref{lemma2} for the DLSR and DMDR scenarios with binary inputs and considering peak and average concentration constraints. It is shown in this lemma, whose proof is provided in Appendix \ref{AppendixProoflemma2}, that the maximum achievable individual rate for each transmitter in the two scenarios occurs when the signal concentration of the other transmitter is zero and therefore the closed form formula for the maximum achievable individual rates is obtained.\\
The interference-as-noise inner bound for an interference channel consists of all rate pairs $(R_1, R_2)$ such that \cite{ElGamalKim}
\begin{align}
R_1 < I(X_1;Y_1),\quad 
R_2 < I(X_2;Y_2),
\end{align}
for some p.m.f $P(x_1)P(x_2)$. This inner bound is computed in Lemma \ref{lemma3} for the two scenarios with binary inputs and considering peak and average concentration constraints. The proof of this lemma is provided in Appendix \ref{AppendixProoflemma3}. 

\begin{lemma}\label{lemma2}
Consider interference channels with two sender-receiver pairs and $P^{DLSR}(y_1,y_2|x_1,x_2)$, $P^{DMDR,B}(y_1,y_2|x_1,x_2)$, and any input p.m.f $P(x_1)P(x_2)$, in which $A_{ne}^{DLSR}=A_{ne}^{DMDR}=0$, $x_1\in \{ 0, A_{s_1} \}$, $x_2 \in \{ 0, A_{s_2} \}$, $\E[X_1] \leq \alpha_{s_1} A_{s_1}$, and $\E[X_2] \leq \alpha_{s_2} A_{s_2}$. The time-division inner bound on the capacity region of these channels is obtained as
\begin{equation}
\begin{aligned}
&R_1 < k C_1, \quad R_2 < (1-k) C_2,\\
&C_i=\begin{cases} H\left(\frac{1}{1+e^{g(p_{c_{i0}})}}\right)-\frac{g(p_{c_{i0}})}{1+e^{g(p_{c_{i0}})}},\quad &\alpha_{s_i} \geq \frac{1}{1-p_{c_{i0}}+e^{\frac{-p_{c_{i0}} \log p_{c_{i0}}}{1-{p_c}_{i0}}}},\\
f_I(\alpha_{s_i}, p_{c_{i0}}), \quad &0 < \alpha_{s_i}< \frac{1}{1-p_{c_{i0}}+e^{\frac{-p_{c_{i0}} \log p_{c_{i0}}}{1-{p_c}_{i0}}}},
\end{cases}\qquad i =1,2,
\end{aligned}
\end{equation}
for some $k \in [0,1]$, where $H(p)=-p\log p-(1-p) \log{(1-p)}$, $g(p)=\frac{H(p)} {1-p}$, and $f_I(\alpha,p)= -\alpha(1-p) \log{\alpha}+\alpha p \log{p}-(1-\alpha+\alpha p)\log{(1-\alpha+\alpha p)}$. For the DLSR scenario, $p_{c_{i0}}=\left(\frac{\frac{\kappa}{\gamma}}{{A_s}_i+\frac{\kappa}{\gamma}}\right)^{nN}$, $i=1,2$ and for the DMDR scenario with blocking, $p_{c_{i0}}=\left(\frac{\frac{\kappa_i}{\gamma_i}}{ {A_s}_i+\frac{\kappa_i}{\gamma_i}}\right)^{\frac{nN}{2}}$, $i=1,2$.
\end{lemma}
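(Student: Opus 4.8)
The plan is to reduce each individual-rate optimization to the point-to-point problem already solved in Lemma~\ref{lemma1}. The time-division region \eqref{timediv} is a standard consequence of time-sharing for the interference channel viewed with full receiver cooperation \cite{ElGamalKim}, so the whole task is to evaluate the two constants
\[
C_1=\max_{x_2,~P(x_1)} I(X_1;Y_1|X_2=x_2),\qquad C_2=\max_{x_1,~P(x_2)} I(X_2;Y_2|X_1=x_1),
\]
over the binary alphabet $x_i\in\{0,A_{s_i}\}$ subject to $\E[X_i]\le\alpha_{s_i}A_{s_i}$. By symmetry I treat $C_1$; the argument for $C_2$ is identical.

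First I would record the structure of the conditional channel. With $A_{ne}=0$ and $X_2=x_2$ fixed, the binding probability of receptor type $1$ vanishes whenever $X_1=0$ for both the DLSR form $p_{b_1}^{DLSR}$ and the DMDR form $p_{b_1}^{DMDR,B}$; hence the input letter $X_1=0$ always forces the deterministic output $Y_1=0$. When $X_1=A_{s_1}$, the binding probability equals $p_0:=A_{s_1}/(A_{s_1}+\kappa/\gamma)$ (DLSR) or $p_0:=A_{s_1}/(A_{s_1}+\kappa_1/\gamma_1)$ (DMDR) at $x_2=0$, and it is strictly decreasing in $x_2$ because $x_2$ enters only additively in the denominators of $p_{b_1}^{DLSR}$ and $p_{b_1}^{DMDR,B}$. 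Thus for every fixed $x_2$ the conditional channel is a Z-type binomial channel: one letter pins $Y_1=0$, the other induces a $\mathrm{Binomial}(N',p_1(x_2))$ output with $p_1(x_2)\le p_0$ and $N'=nN$ (DLSR) or $N'=nN/2$ (DMDR).

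The crux is to show that the inner maximization over $x_2$ is attained at $x_2=0$, and I expect this step to be the main obstacle, since it is what rules out any advantage from a nonzero interfering concentration. For this I would argue degradedness by binomial thinning: the channel at any $x_2>0$ is obtained from the channel at $x_2=0$ by independently retaining each of the $Y_1$ bound receptors with probability $\rho:=p_1(x_2)/p_0\le1$. This thinning sends the deterministic $0$ to $0$ and maps $\mathrm{Binomial}(N',p_0)$ to $\mathrm{Binomial}(N',p_0\rho)=\mathrm{Binomial}(N',p_1(x_2))$, so it reproduces the $x_2>0$ channel exactly while acting on $Y_1$ alone, independently of $X_1$. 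Consequently $X_1\to Y_1^{(0)}\to Y_1^{(x_2)}$ is a Markov chain, and the data-processing inequality gives $I(X_1;Y_1|X_2=x_2)\le I(X_1;Y_1|X_2=0)$ for every input p.m.f.; maximizing over $P(x_1)$ then places the outer $x_2$-maximum at $x_2=0$.

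Finally, at $x_2=0$ the conditional channel $X_1\to Y_1$ on the alphabet $\{0,A_{s_1}\}$ is precisely the point-to-point BIC \eqref{binomchan} with $A_{ne}=0$, trial count $N'$, and peak level $A_s'=A_{s_1}$. Its probability of $Y_1=0$ given $X_1=A_{s_1}$ is $(1-p_0)^{N'}$, which equals $p_{c_{10}}$ with the stated exponents ($nN$ for DLSR, $nN/2$ for DMDR). Invoking Lemma~\ref{lemma1} with $p_c=p_{c_{10}}$ and the same average parameter $\alpha_{s_1}$ then yields the two-branch closed form for $C_1$, and symmetrically for $C_2$; combined with the time-division region this completes the proof.
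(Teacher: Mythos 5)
Your proposal is correct, and it reaches the same two conclusions the paper needs (the inner maximum over $x_2$ sits at $x_2=0$, and the resulting channel is the point-to-point BIC of Lemma~\ref{lemma1}), but by a genuinely different route at the key step. The paper proceeds computationally: it writes out $I(X_1;Y_1|X_2=x_2)$ for binary $X_1$, observes that all terms involving $P(y_1=i|x_1=A_{s_1},X_2=x_2)$ for $i\geq 1$ cancel between $H(Y_1|X_2=x_2)$ and $H(Y_1|X_1,X_2=x_2)$ so that the mutual information depends on the channel only through $p_{c_1}=P(y_1=0|x_1=A_{s_1},X_2=x_2)$, then optimizes over $\alpha_1$ in closed form and shows by differentiating in $x_2$ that both the unconstrained optimum $I_{\alpha_1^*}$ and $I_{\alpha_1}$ for fixed $\alpha_1$ decrease in $x_2$; the average constraint then forces a four-way case analysis comparing $\alpha_{s_1}$ with $\alpha_1^*(0)$ and $\alpha_1^*(A_{s_2})$. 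You instead establish stochastic degradation: since the marginal of $Y_1$ given $x_1=A_{s_1}$ is $\mathrm{Binomial}(N',p_1(x_2))$ with $p_1(x_2)\leq p_1(0)$ (this is where you correctly marginalize the multinomial in DLSR), binomial thinning with retention $\rho=p_1(x_2)/p_1(0)$ realizes the $x_2>0$ channel as a physically degraded version of the $x_2=0$ channel, and the data-processing inequality gives $I(X_1;Y_1|X_2=x_2)\leq I(X_1;Y_1|X_2=0)$ for \emph{every} admissible $P(x_1)$. This buys you two things: the paper's entire case analysis for the average constraint collapses, because the inequality holds uniformly over input laws and hence survives the constrained maximization; and the derivative computations in $x_2$ are avoided. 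What the paper's explicit computation buys in exchange is the closed-form expression of $I$ as a function of $(\alpha_1,p_{c_1})$, which it reuses in the proof of Lemma~\ref{lemma3}; your citation of Lemma~\ref{lemma1} recovers the same final formula, so nothing is lost for Lemma~\ref{lemma2} itself. One small point worth stating explicitly if you write this up: the thinning kernel must be the same for both input letters, which holds because it acts on the realized value of $Y_1$ alone (and maps $0$ to $0$), so the Markov chain $X_1\to Y_1^{(0)}\to Y_1^{(x_2)}$ is legitimate.
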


\begin{lemma}\label{lemma3}
Consider interference channels with two sender-receiver pairs and $P^{DLSR}(y_1,y_2|x_1,x_2)$, $P^{DMDR,B}(y_1,y_2|x_1,x_2)$, and any input p.m.f $P(x_1)P(x_2)$, in which $A_{ne}^{DLSR}=A_{ne}^{DMDR}=0$, $x_1\in \{ 0, A_{s_1} \}$, $x_2 \in \{ 0, A_{s_2} \}$, $\E[X_1] \leq \alpha_{s_1} A_{s_1}$, and $\E[X_2] \leq \alpha_{s_2} A_{s_2}$. The interference-as-noise inner bound on the capacity region of these channels is obtained as
\begin{equation}
\begin{aligned}
R_i &<-\log{\alpha_i}+\alpha_i((1-\alpha_{j_i}) p_{c_{i0}}+\alpha_{j_i} p_{c_{i1}})\log{((1-\alpha_{j_i}) p_{c_{i0}}+\alpha_{j_i} p_{c_{i1}})}\\
&\quad -\alpha_i \left(\frac{1-\alpha_i}{\alpha_i}+(1-\alpha_{j_i}) p_{c_{i0}}+\alpha_{j_i} p_{c_{i1}}\right)\log{\left(\frac{1-\alpha_i}{\alpha_i}+(1-\alpha_{j_i}) p_{c_{i0}}+\alpha_{j_i} p_{c_{i1}}\right)}, \qquad i=1,2,
\end{aligned}
\end{equation}
for some $\alpha_1 \in [0, \alpha_{s_1}]$, $\alpha_2 \in [0, \alpha_{s_2}]$, where $j_1=2$ and $j_2=1$. For the DLSR scenario, $p_{c_{i0}}=\left(\frac{\frac{\kappa}{\gamma}}{A_{s_i}+\frac{\kappa}{\gamma}}\right)^{nN}$, $p_{c_{i1}}=\left(\frac{A_{s_{j_i}}+\frac{\kappa}{\gamma}}{A_{s_i}+A_{s_{j_i}}+\frac{\kappa}{\gamma}}\right)^{nN}$, $i=1,2$ and for the DMDR scenario with blocking, $p_{c_{i0}}=\left(\frac{\frac{\kappa_i}{\gamma_i}}{ A_{s_i}+\frac{\kappa_i}{\gamma_i}}\right)^{\frac{nN}{2}}$, $p_{c_{i1}}=\left(\frac{\frac{\gamma_i^{Block,j_i}}{\kappa_i^{Block,j_i}}A_{s_{j_i}}+1}{\frac{\gamma_i}{\kappa_i}A_{s_i}+\frac{\gamma_i^{Block,j_i}}{\kappa_i^{Block,j_i}}A_{s_{j_i}}+1}\right)^{\frac{nN}{2}}$, $i=1,2$, where $j_1=2$ and $j_2=1$.\\
For $A_{s_1}=A_{s_2}=A_s$, we have $p_{c_{10}}=p_{c_{20}}$ and $p_{c_{11}}=p_{c_{21}}$. Assume $\alpha_{s_1}=\alpha_{s_2}=\alpha_s$. The points where $R_1=R_2$ are obtained when $\alpha_1=\alpha_2$ and are computed as follows:
\begin{align}\label{maxeqachievablerates}
\nonumber
R_1=R_2=k&\bigg [-\log{\alpha^\prime}+\alpha^\prime((1-\alpha^\prime) p_{c_{10}}+\alpha^\prime p_{c_{11}}) \log{((1-\alpha^\prime) p_{c_{10}}+\alpha^\prime p_{c_{11}})}\\
&\quad -\alpha^\prime \left(\frac{1-\alpha^\prime}{\alpha^\prime}+(1-\alpha^\prime) p_{c_{10}}+\alpha^\prime p_{c_{11}}\right)\log{\left(\frac{1-\alpha^\prime}{\alpha^\prime}+(1-\alpha^\prime) p_{c_{10}}+\alpha^\prime p_{c_{11}}\right)}\bigg],
\end{align}
for some $k \in [0,1]$, where $\alpha^\prime=\min \{ \alpha, \alpha_s \}$ and $\alpha$ is the solution of the following equation:
\begin{align}
\nonumber
&((1-2 \alpha)p_{c_{10}}+2\alpha p_{c_{11}})\log{((1-\alpha) p_{c_{10}}+\alpha p_{c_{11}})}\\
&\quad-((1-2\alpha)p_{c_{10}}+2\alpha p_{c_{11}}-1)\log{\left(\frac{1-\alpha}{\alpha}+(1-\alpha) p_{c_{10}}+\alpha p_{c_{11}}\right)}=0.
\end{align}
\end{lemma}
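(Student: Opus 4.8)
The plan is to exploit the fact that, with no environment noise ($A_{ne}^{DLSR}=A_{ne}^{DMDR}=0$) and a binary input, each single-user channel collapses to a Z-channel whose crossover probability is obtained by averaging over the interferer. Write $\alpha_i := \msP(X_i = A_{s_i})$, so the peak constraint is automatic and the average constraint reads $\E[X_i] = \alpha_i A_{s_i} \le \alpha_{s_i} A_{s_i}$, i.e.\ $\alpha_i \in [0,\alpha_{s_i}]$. First I would observe that when $X_i = 0$ the binding probability $p_{b_i}$ vanishes in both $P^{DLSR}$ and $P^{DMDR,B}$, so $Y_i = 0$ with probability one irrespective of $X_{j_i}$. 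Hence the event $\{Y_i \neq 0\}$ can only be produced by $X_i = A_{s_i}$.

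Next I would pass to the statistic $T_i := \mathbf{1}\{Y_i \neq 0\}$ and argue it is sufficient for $X_i$. Because $X_i = 0 \Rightarrow Y_i = 0$, conditioning on $T_i$ renders $Y_i$ independent of $X_i$: given $T_i = 0$ the output is the constant $0$, and given $T_i = 1$ the input is forced to be $A_{s_i}$. Therefore $I(X_i;Y_i) = I(X_i;T_i)$, and the interference-as-noise rate for user $i$ is exactly the mutual information of a Z-channel with input probability $\alpha_i$ and crossover $q_i := \msP(Y_i = 0 \mid X_i = A_{s_i})$. Averaging the conditional probability $\msP(Y_i=0\mid X_i=A_{s_i},X_{j_i})$ over the independent interferer gives $q_i = (1-\alpha_{j_i}) p_{c_{i0}} + \alpha_{j_i} p_{c_{i1}}$, where $p_{c_{i0}}$ and $p_{c_{i1}}$ are the probabilities that no receptor of user $i$ binds when $X_{j_i}=0$ and when $X_{j_i}=A_{s_{j_i}}$, respectively. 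Substituting the binding probabilities and using $1-\tfrac{u}{u+v}=\tfrac{v}{u+v}$ (then raising to the number of receptors, $nN$ for DLSR and $\tfrac{nN}{2}$ for DMDR) reproduces the stated closed forms of $p_{c_{i0}},p_{c_{i1}}$ for both scenarios.

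With the Z-channel in hand, the rate is $I(X_i;T_i) = H\big(\alpha_i(1-q_i)\big) - \alpha_i H(q_i)$, since $H(T_i\mid X_i) = \alpha_i H(q_i)$ and $\msP(T_i = 1) = \alpha_i(1-q_i)$. Expanding the binary entropies and collecting the $\log\alpha_i$ terms, using $\alpha_i(1-q_i)+(1-\alpha_i)+\alpha_i q_i = 1$ together with the factorization $\log\big((1-\alpha_i)+\alpha_i q_i\big)=\log\alpha_i+\log\big(\tfrac{1-\alpha_i}{\alpha_i}+q_i\big)$, yields precisely the displayed expression for $R_i$. This step is routine algebra; the only care needed is to retain that factorization so that the clean $-\log\alpha_i$ term emerges and $\alpha_i\big(\tfrac{1-\alpha_i}{\alpha_i}+q_i\big)=(1-\alpha_i)+\alpha_i q_i$ matches the weight of the last logarithm.

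Finally, for the symmetric case $A_{s_1}=A_{s_2}$, $\alpha_{s_1}=\alpha_{s_2}=\alpha_s$, the rate expression is invariant under swapping the two indices, so the symmetric points $R_1=R_2$ are achieved on the diagonal $\alpha_1=\alpha_2=:\alpha$, where $q=(1-\alpha)p_{c_{10}}+\alpha p_{c_{11}}$. I would then maximize the resulting single-variable rate over $\alpha\in[0,\alpha_s]$: differentiating $R_i(\alpha)$ while noting that $\alpha$ enters both as the input probability and inside $q$, so that $\tfrac{dq}{d\alpha}=p_{c_{11}}-p_{c_{10}}$, and setting the derivative to zero produces the stated transcendental equation. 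The constrained optimum is $\alpha'=\min\{\alpha,\alpha_s\}$ because the average constraint caps the input probability at $\alpha_s$, and the factor $k\in[0,1]$ traces the segment from the origin to the optimal symmetric corner by time-sharing inside the interference-as-noise rectangle. I expect this last optimization to be the main obstacle: carrying out the differentiation cleanly, simplifying the stationarity condition to the compact form in the statement, and checking that the resulting critical point is indeed the constrained maximizer rather than a saddle or boundary artifact.
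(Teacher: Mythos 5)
Your proposal is correct and follows essentially the same route as the paper: both reduce each user's channel to a Z-channel with the interference-averaged crossover $q_i=(1-\alpha_{j_i})p_{c_{i0}}+\alpha_{j_i}p_{c_{i1}}$ (using $X_i=0\Rightarrow Y_i=0$), expand the resulting binary mutual information into the displayed form, and obtain the symmetric point by differentiating in $\alpha$ and clipping at $\alpha_s$ via concavity. Your sufficient-statistic shortcut $T_i=\mathbf{1}\{Y_i\neq 0\}$ is a cleaner way to justify the collapse than the paper's direct entropy expansion (where the $H(Y_i\mid x_i=A_{s_i})$ terms cancel), but it leads to the identical computation.
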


\textbf{SMSR:} 
As mentioned before, we may view the SMSR scenario as a SISO MAC. According to \cite{ElGamalKim} for a MAC, the maximum achievable individual rates are
\begin{equation}
\begin{aligned}
C_1=\max_{\substack{x_2,~P(x_1)}} I(X_1;Y|X_2=x_2) ,\quad C_2=\max_{\substack{x_1,~ P(x_2)}} I(X_2;Y|X_1=x_1).
\end{aligned}
\end{equation}
Using these rates, the time-division inner bound can be obtained as \eqref{timediv}. This inner bound is computed in lemma \ref{lemma8} for the SMSR scenario with binary input and considering peak and average concentration constraints. The proof of this lemma is provided in Appendix \ref{AppendixProoflemma8}.
\begin{lemma}\label{lemma8}
Consider a MAC with two transmitters $P^{SMSR}(y|x_1,x_2)$ and any input p.m.f $P(x_1)P(x_2)$, in which $A_{ne}^{SMSR}=0$, $x_1\in \{ 0, A_{s_1} \}$, $x_2 \in \{ 0, A_{s_2} \}$, $\E[X_1] \leq \alpha_{s_1} A_{s_1}$, and $\E[X_2] \leq \alpha_{s_2} A_{s_2}$. The time-division inner bound on the capacity region of this channel is obtained as
\begin{equation}
\begin{aligned}
R_1&<k \max \{c_{10},c_{11}\}, \quad R_2<(1-k) \max \{c_{20},c_{21}\},\\
c_{i0}&=\begin{cases} H\left(\frac{1}{1+e^{g(p_{c_{i0}})}}\right)-\frac{g(p_{c_{i0}})}{1+e^{g(p_{c_{i0}})}},\quad &\alpha_{s_i} \geq \frac{1}{1-p_{c_{i0}}+e^{\frac{-p_{c_{i0}} \log p_{c_{i0}}}{1-{p_c}_{i0}}}}, \\
f_I(\alpha_{s_i},p_{c_{i0}}), \quad &0<\alpha_{s_i}< \frac{1}{1-p_{c_{i0}}+e^{\frac{-p_{c_{i0}} \log p_{c_{i0}}}{1-{p_c}_{i0}}}}, 
\end{cases} \qquad i=1,2,\\
c_{i1}&=-\sum_{l=0}^{nN} \left[(1-\alpha_i^\prime) P(y=l|x_i=0,x_{j_i}=A_{s_{j_i}})\log{\left((1-\alpha_i^\prime)+\alpha_i^\prime \frac{P(y=l|x_i=A_{s_i},x_{j_i}=A_{s_{j_i}})}{P(y=l|x_i=0,x_{j_i}=A_{s_{j_i}})}\right)}\right.\\
&\quad \left.+\alpha_i^\prime P(y=l|x_i=A_{s_i},x_{j_i}=A_{s_{j_i}}) \log{\left((1-\alpha_i^\prime) \frac{P(y=l|x_i=0,x_{j_i}=A_{s_{j_i}})}{P(y=l|x_i=A_{s_i},x_{j_i}=A_{s_{j_i}})}+\alpha_i^\prime \right)}\right],\qquad i=1,2,
\end{aligned}
\end{equation}
for some $k \in [0,1]$, where $j_1=2$, $j_2=1$, $H(p)=-p\log p-(1-p) \log{(1-p)}$, $g(p)=\frac{H(p)} {1-p}$, $f_I(\alpha,p)= -\alpha(1-p) \log{\alpha}+\alpha p \log{p}-(1-\alpha+\alpha p)\log{(1-\alpha+\alpha p)}$, $p_{c_{i0}}=\left(\frac{\frac{\kappa}{\gamma}}{A_{s_i}+\frac{\kappa}{\gamma}}\right)^{nN}$, $\alpha_i^\prime=\min\{\alpha_i, \alpha_{s_i}\}$,  $i=1,2$, where $\alpha_i$, $i=1,2$ is the solution of the following equation:
\begin{align}
\nonumber
\sum_{l=0}^{nN} &\left[P(y=l|x_i=0,x_{j_i}=A_{s_{j_i}})\log{\left((1-\alpha_i)+\alpha_i\frac{P(y=l|x_i=A_{s_i},x_{j_i}=A_{s_{j_i}})}{P(y=l|x_i=0,x_{j_i}=A_{s_{j_i}})}\right)}\right.\\
&\quad \left.-P(y=l|x_i=A_{s_i},x_{j_i}=A_{s_{j_i}}) \log{\left((1-\alpha_i) \frac{P(y=l|x_i=0,x_{j_i}=A_{s_{j_i}})}{P(y=l|x_i=A_{s_i},x_{j_i}=A_{s_{j_i}})}+\alpha_i\right)}\right]=0,
\end{align}
where $j_1=2$ and $j_2=1$.
\end{lemma}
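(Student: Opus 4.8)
The plan is to use the individual-rate formulas $C_i=\max_{x_{j_i},\,P(x_i)}I(X_i;Y\mid X_{j_i}=x_{j_i})$ together with the binary-alphabet assumption. Since the conditioning input is restricted to $X_{j_i}\in\{0,A_{s_{j_i}}\}$, the outer maximization over the value $x_{j_i}$ is a comparison of just two sub-problems, so that $C_i=\max\{c_{i0},c_{i1}\}$, where $c_{i0}$ and $c_{i1}$ are the maximal values of $I(X_i;Y\mid X_{j_i}=x_{j_i})$ over $P(x_i)$ in the cases $x_{j_i}=0$ and $x_{j_i}=A_{s_{j_i}}$, respectively. Feeding $C_i=\max\{c_{i0},c_{i1}\}$ into the time-division inner bound \eqref{timediv} already gives the stated region, so the work reduces to evaluating $c_{i0}$ and $c_{i1}$ in closed form under the constraint $\E[X_i]\le\alpha_{s_i}A_{s_i}$, i.e.\ $\mathsf{P}(X_i=A_{s_i})\le\alpha_{s_i}$.

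For $c_{i0}$ the interferer is silent, so with $A_{ne}^{SMSR}=0$ the binding probability collapses to $p_b^{SMSR}=\frac{X_i}{X_i+\frac{\kappa}{\gamma}}$ and $Y$ is binomial with $nN$ trials: this is exactly the point-to-point BIC \eqref{binomchan} with $N^\prime=nN$ and $A_s^\prime=A_{s_i}$. Moreover the input $x_i=0$ forces $Y=0$ deterministically, so the conditional channel is precisely the zero-noise binary-input BIC already analyzed in Lemma~\ref{lemma1}. I would therefore simply invoke Lemma~\ref{lemma1} with $p_c=p_{c_{i0}}=\left(\frac{\frac{\kappa}{\gamma}}{A_{s_i}+\frac{\kappa}{\gamma}}\right)^{nN}$ to read off the two-branch expression for $c_{i0}$, the branch being selected by whether $\alpha_{s_i}$ exceeds the threshold at which the unconstrained optimal input probability becomes feasible.

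For $c_{i1}$ the interferer is active, so writing $P_0(\cdot)=P(\cdot\mid x_i=0,x_{j_i}=A_{s_{j_i}})$ and $P_1(\cdot)=P(\cdot\mid x_i=A_{s_i},x_{j_i}=A_{s_{j_i}})$, both conditional output laws are now non-degenerate binomials, so the Z-channel simplification of $c_{i0}$ no longer applies and the general binary-input mutual information must be kept. With $\alpha_i=\mathsf{P}(X_i=A_{s_i})$ and $P=(1-\alpha_i)P_0+\alpha_i P_1$, I would write $I=(1-\alpha_i)D(P_0\|P)+\alpha_i D(P_1\|P)$ and substitute $P(l)/P_0(l)=(1-\alpha_i)+\alpha_i P_1(l)/P_0(l)$ together with the analogous ratio for $P_1$; this turns $I$ into the single log-ratio sum displayed for $c_{i1}$. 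Because $I$ is concave in $\alpha_i$, its unconstrained maximizer is the unique stationary point, and using $dI/d\alpha_i=D(P_1\|P)-D(P_0\|P)$ with the same substitutions produces exactly the transcendental equation in the statement. Concavity then lets me clamp to the average constraint, so that the optimal input probability is $\alpha_i^\prime=\min\{\alpha_i,\alpha_{s_i}\}$, which is what gets substituted back into the $c_{i1}$ formula.

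I expect the $c_{i1}$ step to be the main obstacle: the bookkeeping reduction $C_i=\max\{c_{i0},c_{i1}\}$ and the identification of $c_{i0}$ with Lemma~\ref{lemma1} are routine once the binary-alphabet and zero-noise hypotheses are invoked, whereas for $c_{i1}$ I must verify carefully that the rewriting of both $I$ and $dI/d\alpha_i$ collapses to the stated compact log-ratio forms and that concavity of $I$ in $\alpha_i$ (standard, since mutual information is concave in the input p.m.f.\ for a fixed channel law) genuinely reduces the constrained optimization to \emph{interior stationary point, else boundary}.
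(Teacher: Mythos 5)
Your proposal is correct and follows essentially the same route as the paper's proof: the same decomposition $C_i=\max\{c_{i0},c_{i1}\}$ over the two values of the interfering input, the identification of the $x_{j_i}=0$ sub-channel with the zero-noise binary-input BIC of Lemma~\ref{lemma1}, and the stationary-point-plus-concavity clamping for $c_{i1}$. The only cosmetic difference is that you express the binary-input mutual information and its derivative via $I=(1-\alpha_i)D(P_0\|P)+\alpha_i D(P_1\|P)$ and $dI/d\alpha_i=D(P_1\|P)-D(P_0\|P)$, whereas the paper expands $H(Y|X_2=x_2)-H(Y|X_1,X_2=x_2)$ directly; both yield the identical log-ratio expressions and transcendental equation.
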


\section{Numerical Results} \label{Simulation}
In this section, we first consider a point-to-point communication, and evaluate the rates for the TS and LS scenarios as well as the lower and KL upper bounds. Then, we evaluate the total capacity and achievable rates for the three scenarios of the multiple-access communications.
\subsection{Point-to-Point Capacity for LS and TS and Effect of Blocking}
We evaluate the rates of the TS scenario given in \eqref{TScap} and the LS scenario given in \eqref{LScap}, using the Blahut-Arimoto (BA) algorithm \cite{blahut}. The unit of the concentration of molecules is nano-Moles per litre (nM). We assume $N=10, n=16$, and use the values $\gamma=\gamma_1=...=\gamma_m=0.0004~\mbox{(nM~min)}^{-1}$ and $\kappa=\kappa_1=...=\kappa_m=0.1~{\mbox{min}}^{-1}$ from \cite{simval}. Note that we consider small values of $N$ and $n$ because of the time complexity of the BA algorithm for large values of $N$ and $n$, although in practice, these values can be very large.

Fig.~\ref{fig5a} shows the capacity of TS with no blocking and LS, for $m=2,4,8,16$ when $A_{ne}^{LS}=A_{ne}^{TS}=0$. It is seen that increasing the number of molecule types, $m$, from 1 improves the performance (for fixed $A_s$), which is expected due to the parallel transmission of the molecules. However, if we continue to increase $m$, and accordingly decrease the number of bacteria in each colony to ${n}/{m}$, the performance degrades. The reason is that decreasing the concentration level of TS in \eqref{bindTS} decreases the binding probability. Hence, there is an optimal $m$. For example, for $A_s=80$, this optimal value lies between $m=4$ and $m=8$. This implies that for $A_s=80$ and $m=2,4$, the capacity of TS is higher than LS, whereas for $m=8,16$, the capacity of TS is lower than LS. Similar conclusions can be made from Fig.~\ref{fig5b} in the presence of the environment noise $A_{ne}^{LS}=A_{ne}^{TS}=5$.
\begin{figure}
\centering
\begin{subfigure}[b]{0.5\textwidth}
\centering
\includegraphics*[trim={3cm 0 0 0}, scale=0.385]{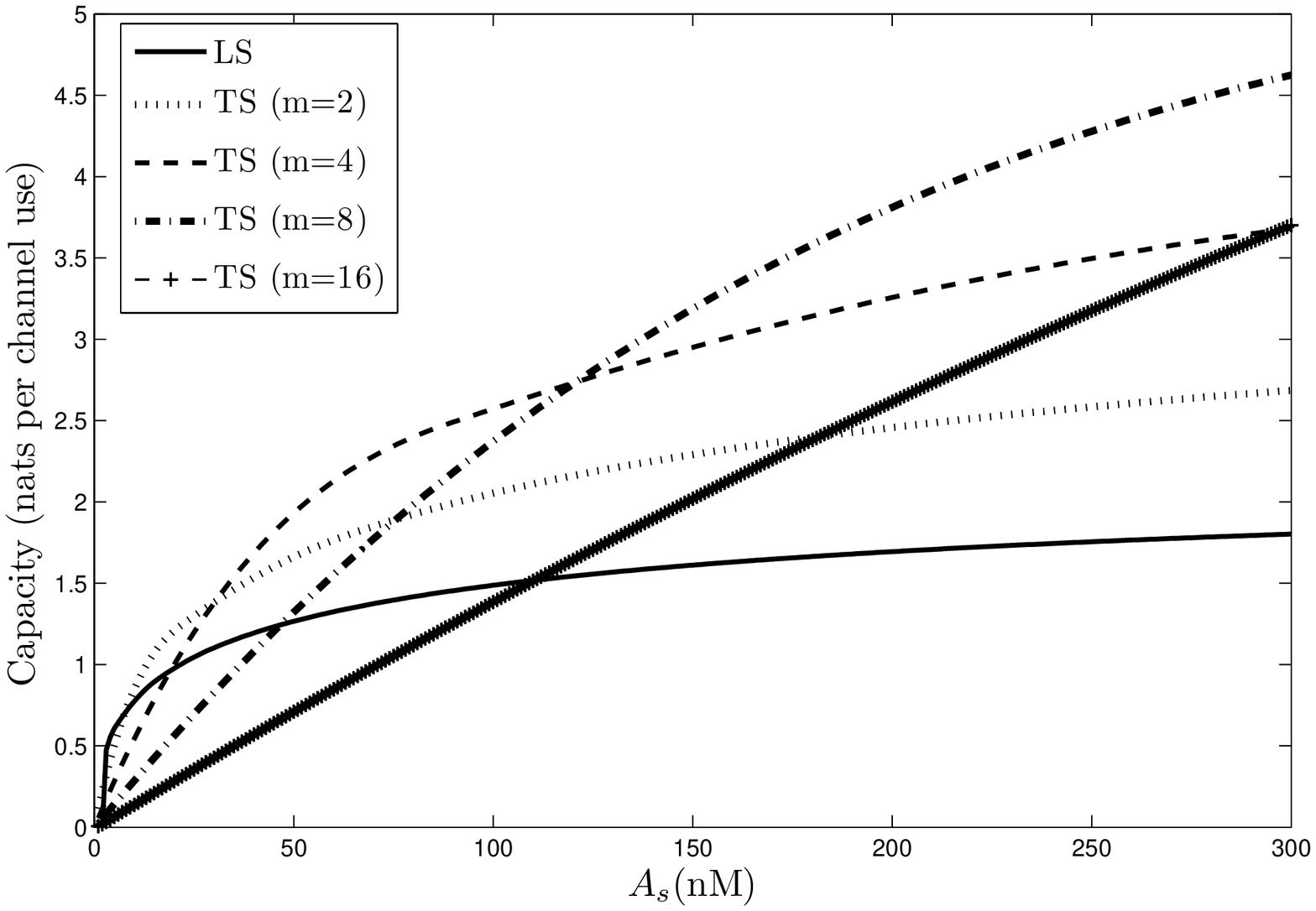}
\caption{$A_{ne}^{LS}=A_{ne}^{TS}=0$ }
\label{fig5a}
\end{subfigure}%
\begin{subfigure}[b]{0.5\textwidth}
\centering
\includegraphics*[trim={2.5cm 0 0 0},scale=0.38]{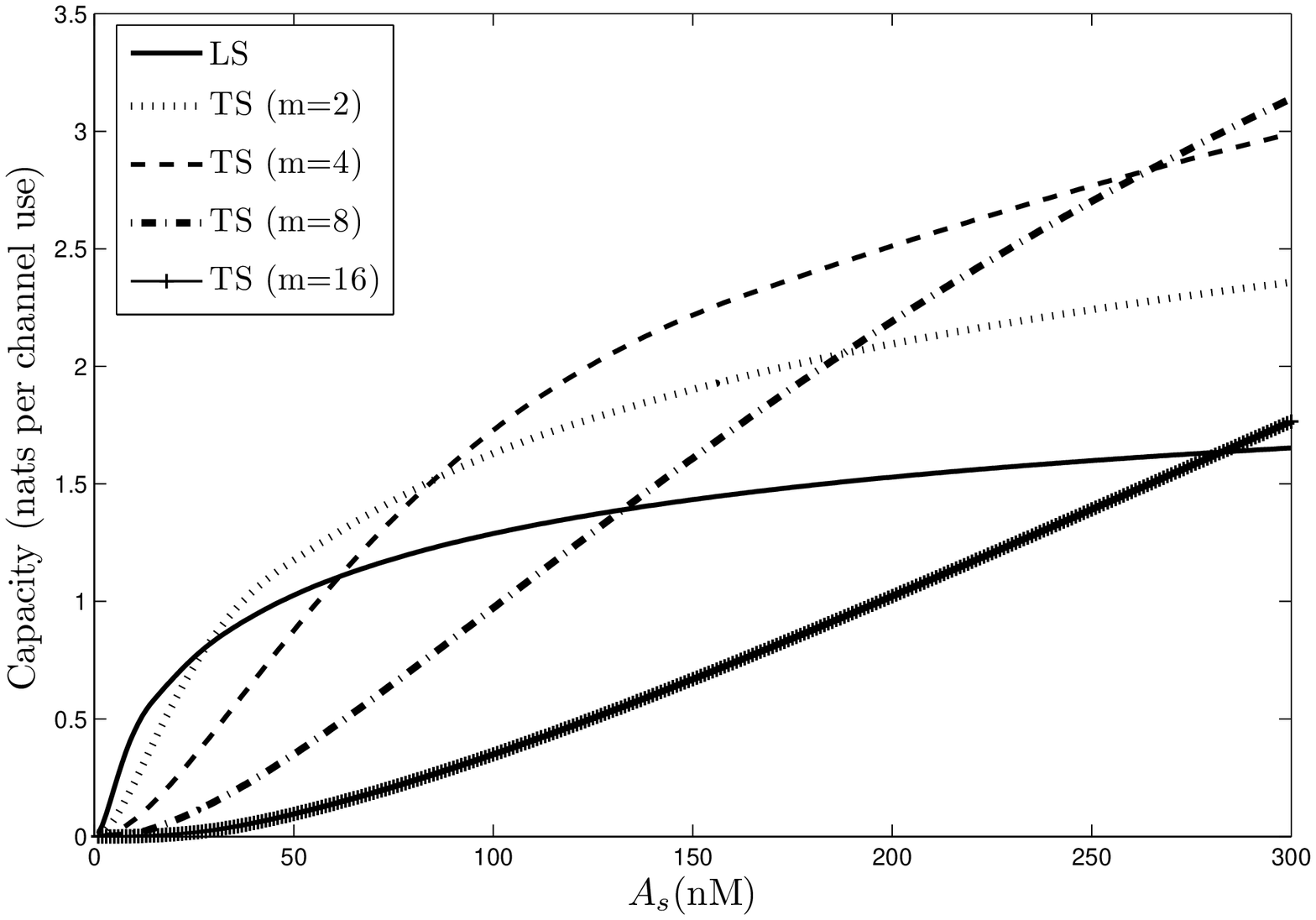}
\caption{$A_{ne}^{LS}=A_{ne}^{TS}=5$ }
\label{fig5b}
\end{subfigure}
\caption{Capacitiy of TS with no blocking and LS for $\alpha_s=\frac{1}{2}$.}
\label{fig5}
\vspace{-1em}
\end{figure}
\begin{figure}
\centering
\includegraphics*[trim={2cm 0 0 0},scale=0.6]{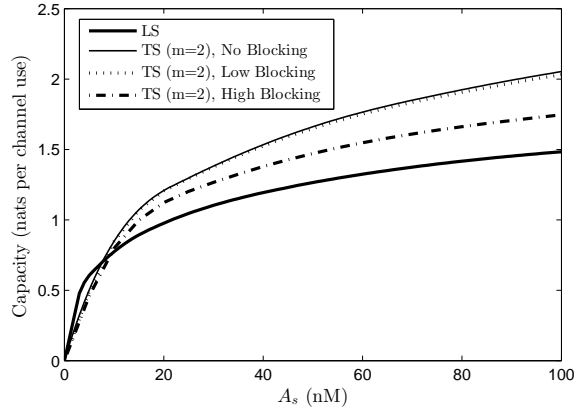}
\caption{Capacitiy of TS with and without blocking and LS for $\alpha_s=\frac{1}{2}$ and $A_{ne}=0$.}
\label{fig6}
\vspace{-1em}
\end{figure}

Fig.~\ref{fig6} shows the effect of blocking by showing the capacity of LS and TS for $m=2$. We considered two blocking cases:
\begin{itemize}
\item \textbf{Low Blocking:} $\gamma_1^{Block,2}=\gamma_2^{Block,1}=0.0003~\mbox{(nM~min)}^{-1}, \kappa_1^{Block,2}=\kappa_2^{Block,1}=0.15~\mbox{min}^{-1}.$
\item \textbf{High Blocking:} $\gamma_1^{Block,2}=\gamma_2^{Block,1}=0.0005~\mbox{(nM~min)}^{-1}, \kappa_1^{Block,2}=\kappa_2^{Block,1}=0.01~\mbox{min}^{-1}.$
\end{itemize}
As illustrated, the blocking decreases the capacity of TS. For small values of $A_s$, LS outperforms TS in all cases of blocking.

\subsection{Lower Bound and KL Upper Bound on the Capacity of the Point-to-Point Channel}
Our proposed KL upper bound, \eqref{upperbound}, and the capacity are depicted in Fig.~\ref{fig7} by considering the logarithmic scale. It can be observed that the distance between the KL upper bound and the capacity is constant in the logarithmic scale. Therefore, the gap between the capacity and the upper bound decreases as the environment noise increases. The lower bound in \eqref{Lowerbound} along with the capacity are shown in Fig.~\ref{fig8}. For simplicity, we consider average constraint to be inactive. For small values of $A_s^\prime$, our lower bound is tight which means the binary distribution is a capacity achieving distribution for small values of $A_s^\prime$.

\begin{figure}
\centering
\begin{minipage}{0.47\textwidth}
\centering
\includegraphics[trim={3cm 0 0 0},scale=0.35]{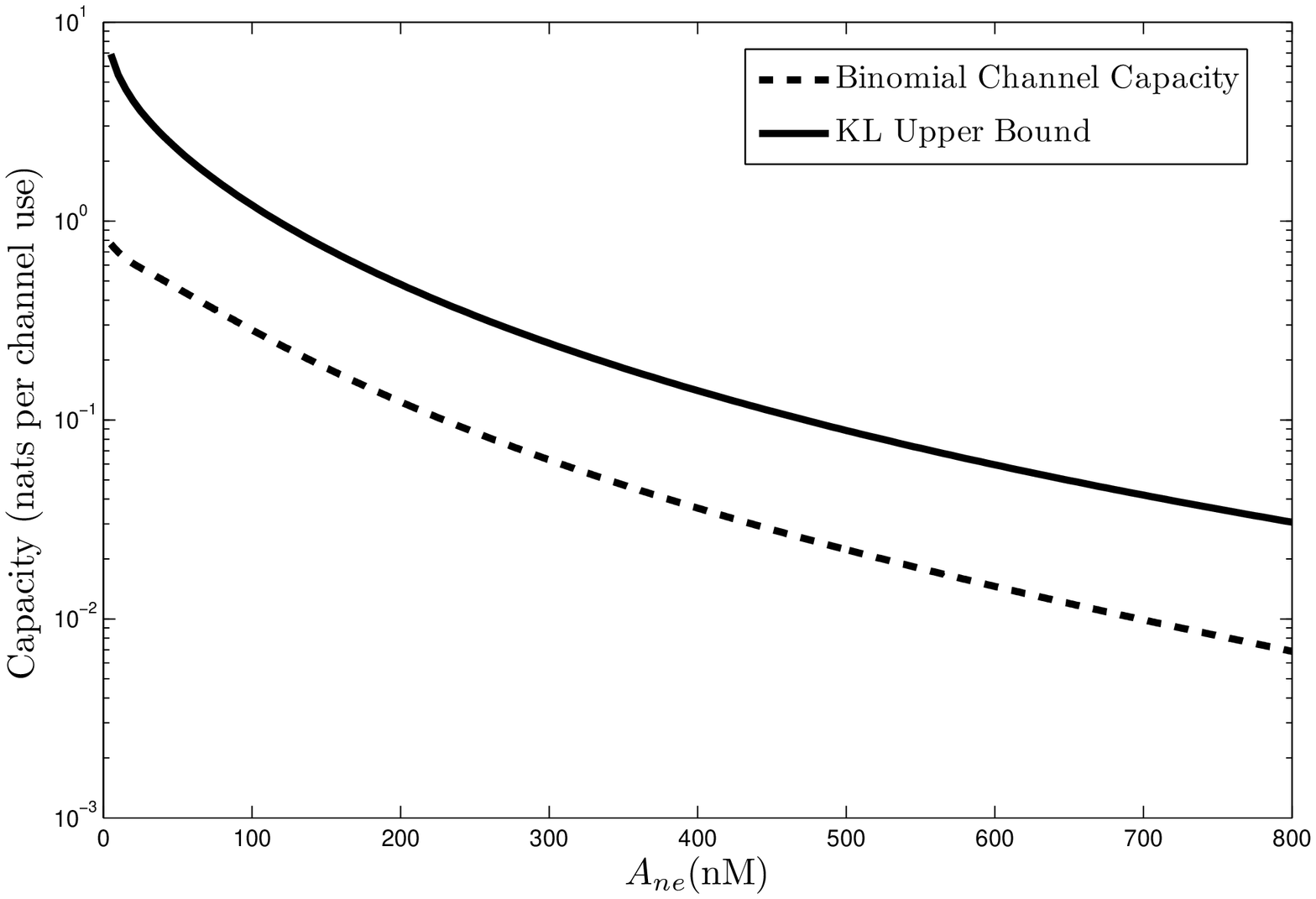}
\caption{Capacity and KL upper bound in terms of $A_{ne}$ for the BIC with $A_s^\prime=80$ and $\alpha_s=\frac{1}{2}$.}
\label{fig7}
\vspace{-1em}
\end{minipage}\qquad%
\begin{minipage}{0.47\textwidth}
\centering
\includegraphics*[trim={3.3cm 0 0 0},scale=0.445]{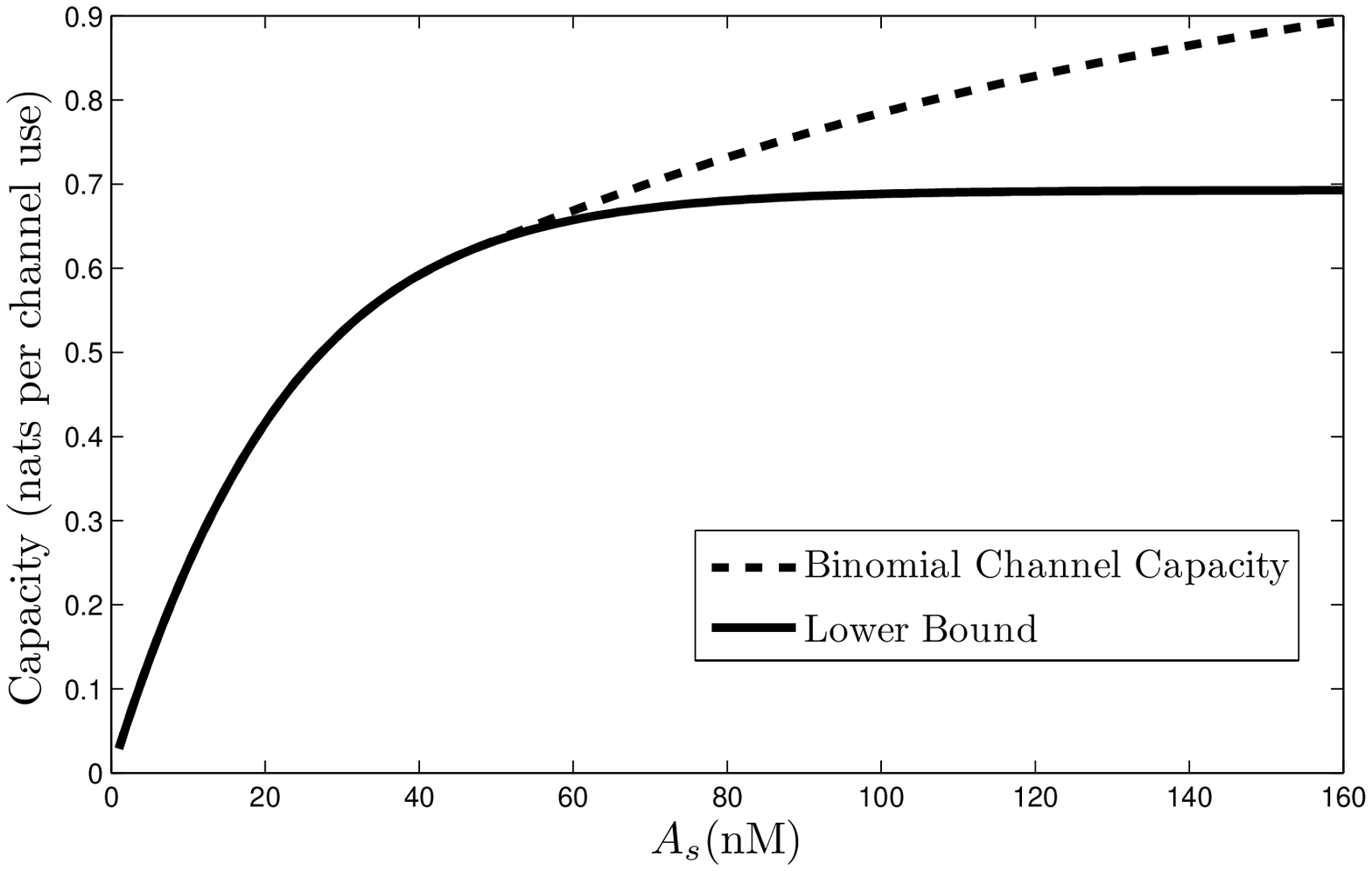}
\caption{Capacity and Lower Bound in terms of $A_s^\prime$ for the BIC with $N^\prime=20$ and $\alpha_s=\frac{1}{2}$.}
\label{fig8}
\vspace{-1em}
\end{minipage}
\end{figure}

\subsection{Multiple-Access Total Capacity}
In this section, we evaluate the total capacities of the DLSR and DMDR scenarios given in \eqref{totalcapacity_MAC1} and the SMSR scenario given in \eqref{totalcapacity_MAC2}, using the extension of the BA algorithm for the total capacity of the MAC \cite{extentionblahut}. We assume $N=10$, $n=6$, $m=2$. Similar to the previous sections, we use the values $\gamma=\gamma_1=\gamma_2=0.0004~\mbox{(nM~min)}^{-1}$, $\kappa=\kappa_1=\kappa_2=0.1~\mbox{min}^{-1},$ and consider no, low, and high blocking cases.

Fig. \ref{fig61An0-gs} shows the total capacities of the three scenarios in terms of $A_{s_1}=A_{s_2}=A_s$ when $A_{ne}^{DLSR}=A_{ne}^{DMDR}=A_{ne}^{SMSR}=0$. It is observed that DLSR has the highest total capacity for all values of $A_s$. For small values of $A_s$, SMSR has higher total capacity than DMDR, whereas for large values of $A_s$, SMSR has lower total capacity than DMDR. The reason is that when $A_s$ is small, sharing the receptors is useful. But when $A_s$ increases, using different types of molecules becomes more useful. Since DLSR has both of these advantages, it is more effective than the other two scenarios. Fig. \ref{fig61An5-gs} shows the total capacity for the three scenarios when $A_{ne}^{DLSR}=A_{ne}^{DMDR}=A_{ne}^{SMSR}=5$. Similar conclusions can be made in the presence of the environment noise.

\begin{figure}
\centering
\begin{subfigure}[b]{0.5\textwidth}
\centering
\includegraphics*[trim={2.9cm 0 0 0},scale=0.6]{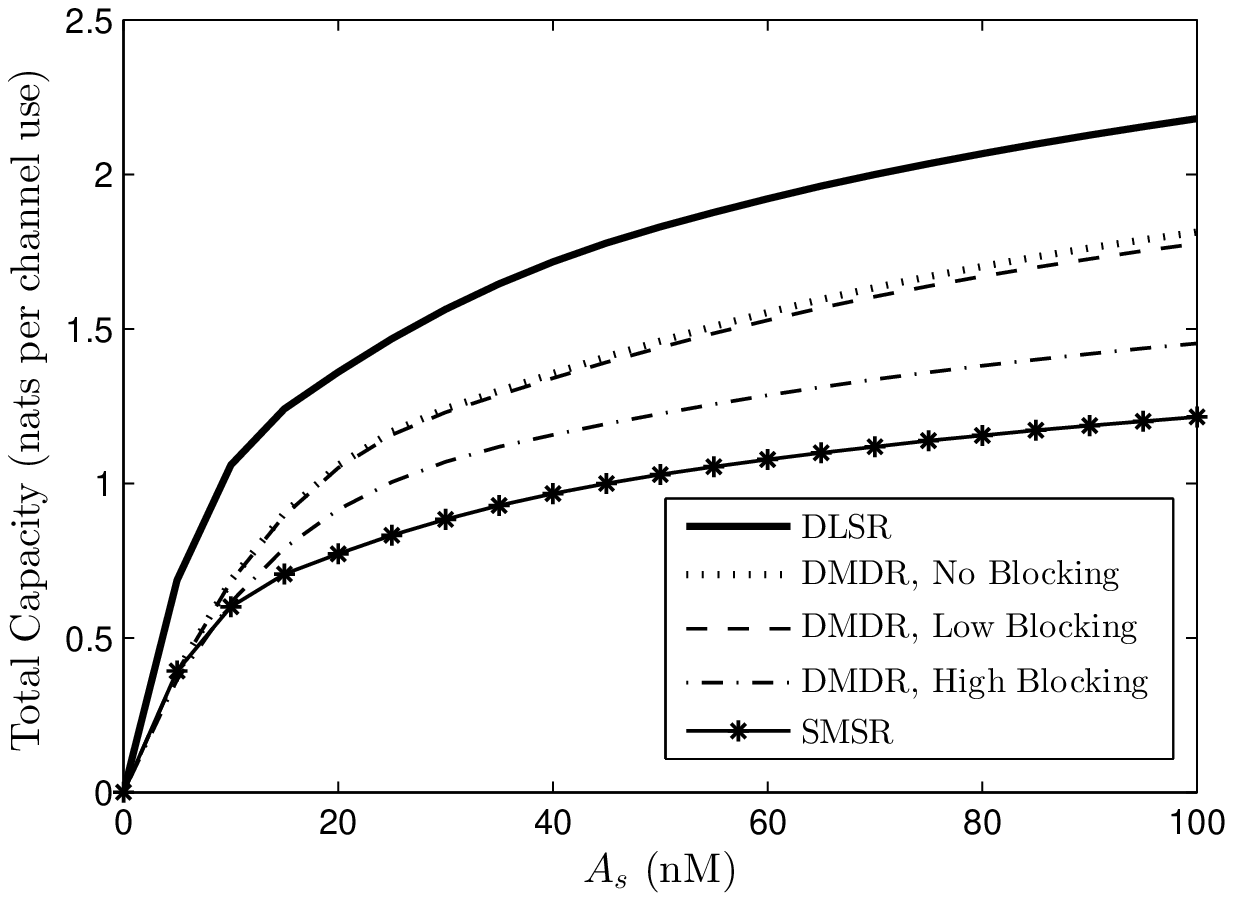}
\caption{$A_{ne}^{DLSR}=A_{ne}^{DMDR}=A_{ne}^{SMSR}=0$}
\label{fig61An0-gs}
\end{subfigure}%
\begin{subfigure}[b]{0.5\textwidth}
\centering
\includegraphics*[trim={2.9cm 0 0 0},scale=0.6]{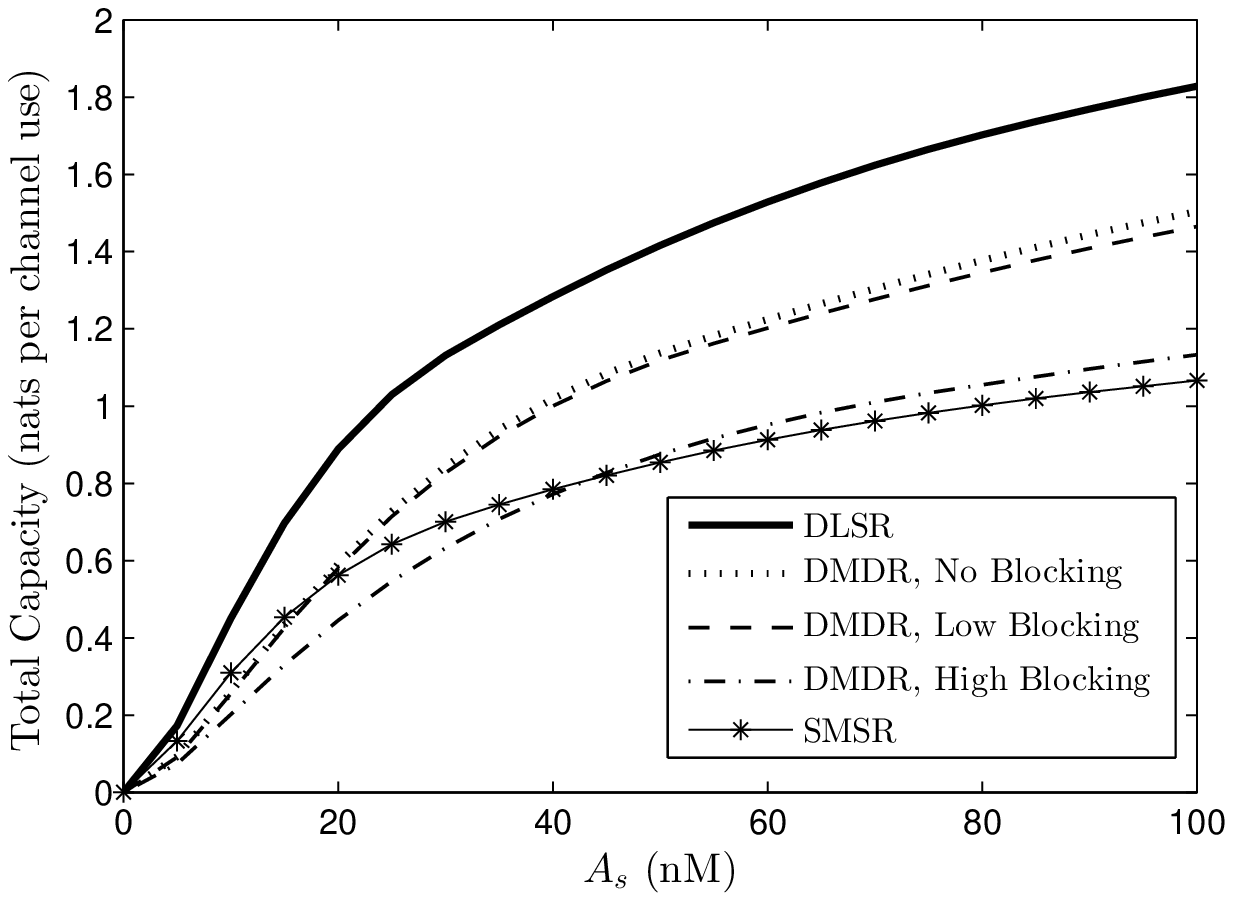}
\caption{$A_{ne}^{DLSR}=A_{ne}^{DMDR}=A_{ne}^{SMSR}=5$}
\label{fig61An5-gs}
\end{subfigure}
\caption{Total capacity of DLSR, DMDR, and SMSR for $\alpha_{s_1}=\alpha_{s_2}=\frac{1}{2}$.}
\label{figtotalcapacity}
\vspace{-1em}
\end{figure}
\begin{figure}
\centering
\includegraphics[trim={2cm 0 0 0},scale=0.5]{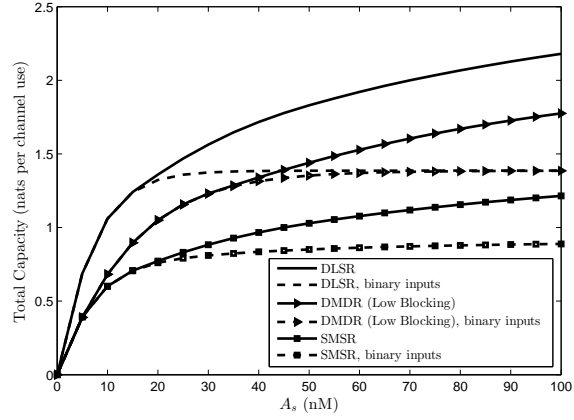}
\caption{Total capacity of DLSR, DMDR, and SMSR with continuous and binary inputs for $\alpha_{s_1}=\alpha_{s_2}=\frac{1}{2}$ and $A_{ne}=0$.}
\label{fig9gs}
\vspace{-1em}
\end{figure}
The total capacities of the three scenarios for both continuous and binary inputs are depicted in Fig.~\ref{fig9gs}. It can be observed that in all three scenarios, the total capacities with binary inputs are equal to the total capacities with continuous inputs for small values of $A_s$. For large values of $A_s$, the total capacities of DLSR and DMDR with binary inputs reach to the same value since all receptors become full and these scenarios behave the same. However, the total capacity of SMSR with binary input reaches to a lower value since it doesn't have the advantage of using different types of molecules or self-identifying labels.

\subsection{Inner Bounds on the Capacity Region of the MAC}
The capacity region inner bounds for the DLSR, DMDR, and SMSR scenarios, provided in Section \ref{innerboundsmac}, are depicted in Fig.~\ref{fig2-gs}. The capacity regions of the three scenarios with binary inputs are shown in Fig. \ref{fig5-gs} by considering $A_{s_1}=A_{s_2}=100$.
\begin{figure}
\centering
\begin{subfigure}[b]{0.5\textwidth}
\centering
\includegraphics*[trim={1.8cm 0 0 0},scale=0.61]{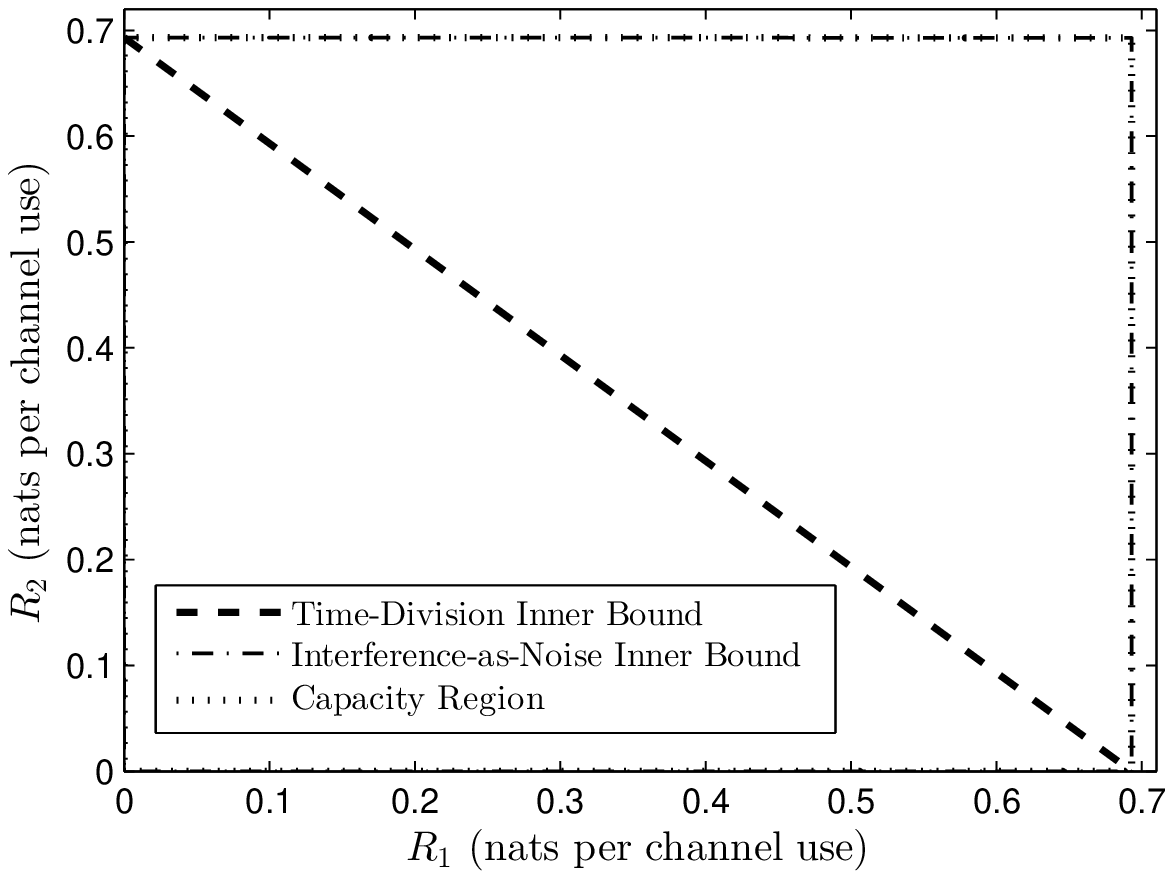}
\caption{DLSR}
\label{fig21-gs-1}
\end{subfigure}%
\begin{subfigure}[b]{0.5\textwidth}
\centering
\includegraphics*[trim={1.8cm 0 0 0},scale=0.61]{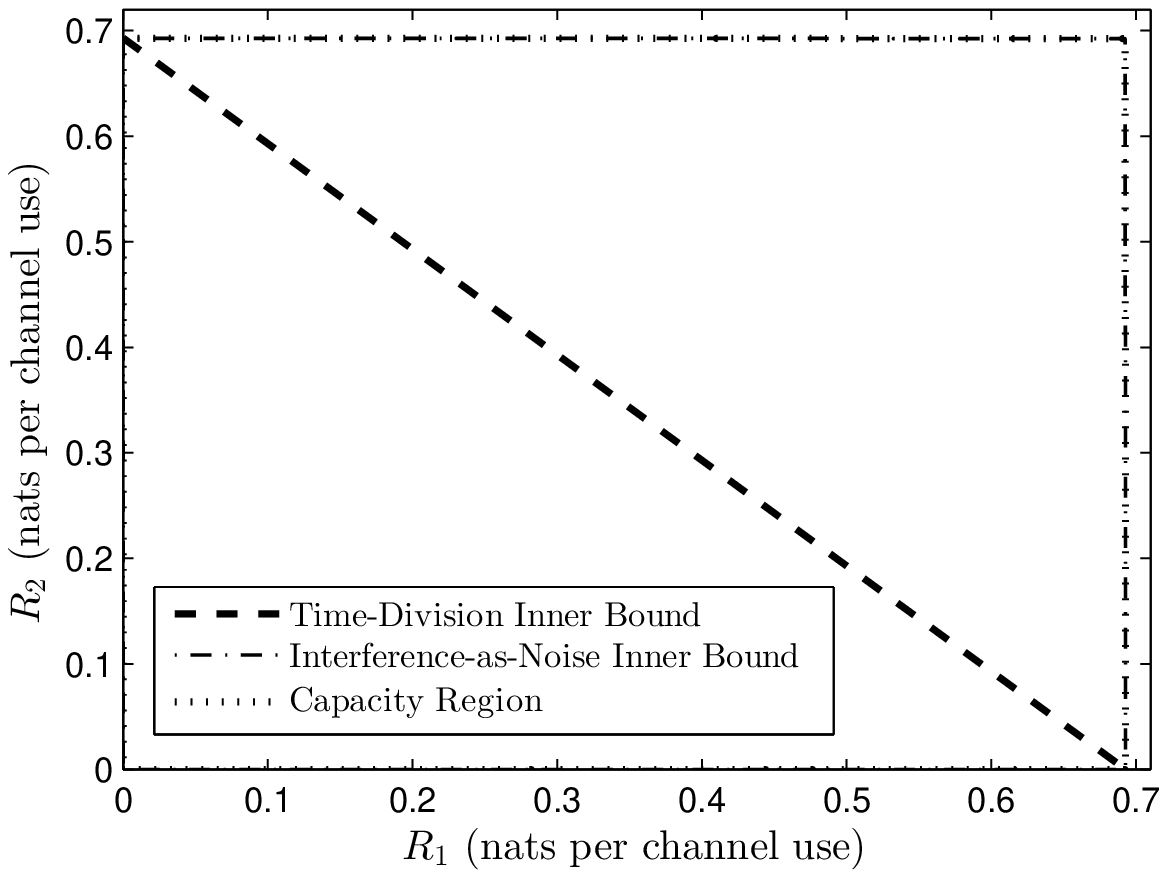}
\caption{DMDR with low blocking}
\label{fig21-gs-2}
\end{subfigure}\\
\begin{subfigure}[b]{0.5\textwidth}
\centering
\includegraphics*[trim={1.8cm 0 0 0},scale=0.61]{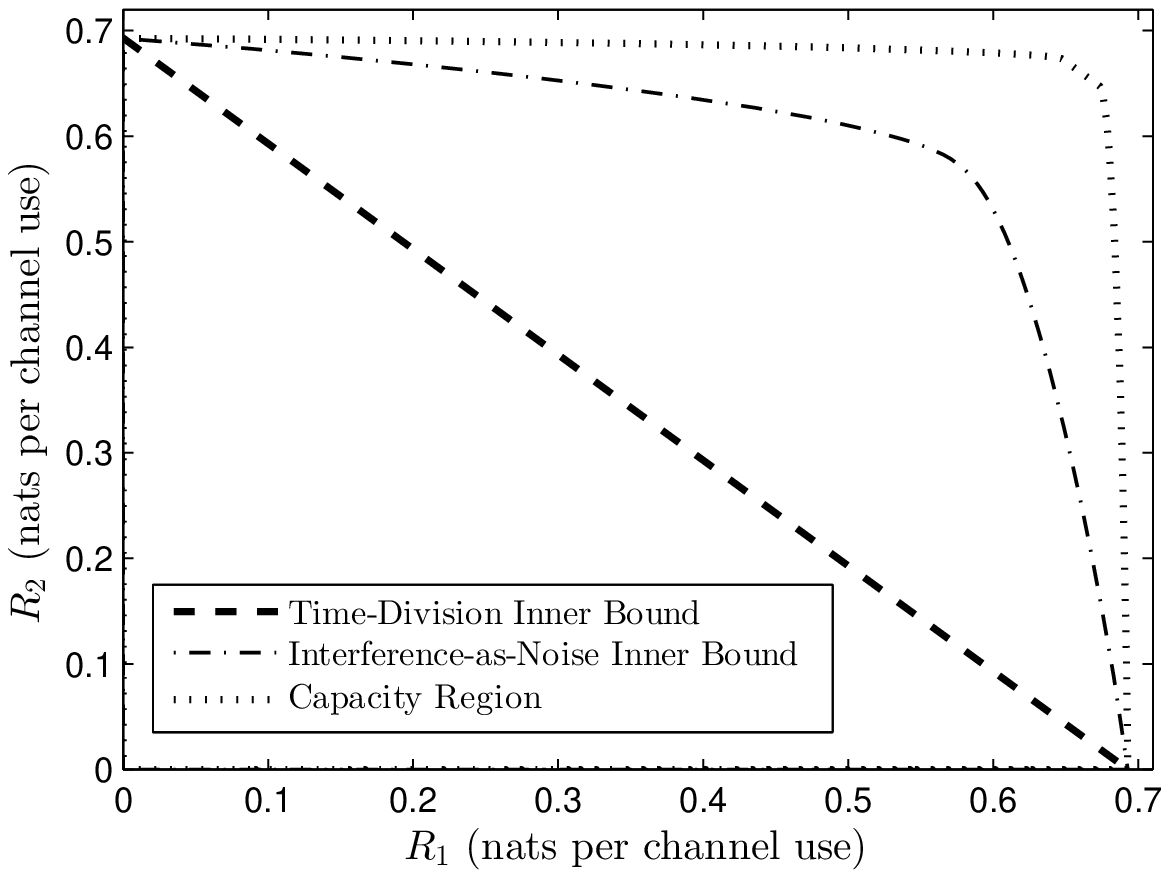}
\caption{DMDR with high blocking}
\label{fig22-gs}
\end{subfigure}%
\begin{subfigure}[b]{0.5\textwidth}
\centering
\includegraphics*[trim={1.8cm 0 0 0},scale=0.61]{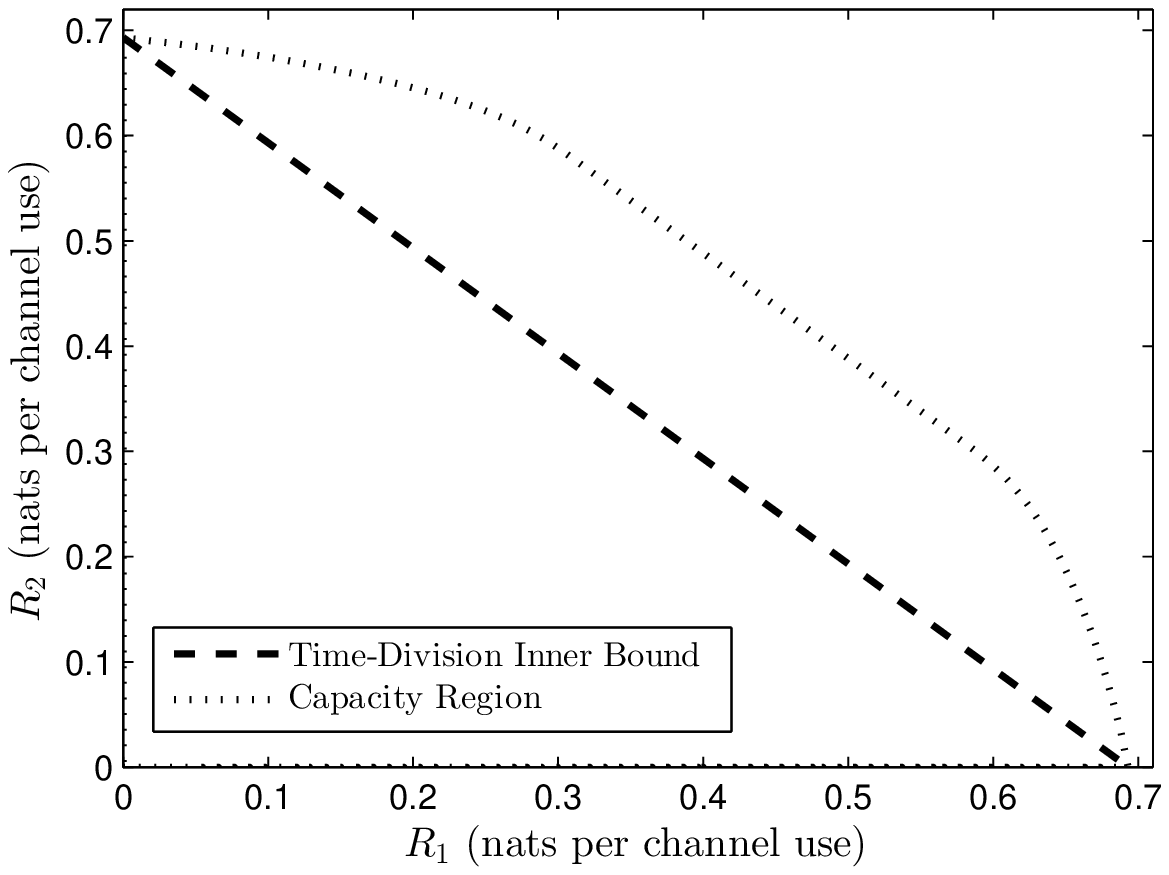}
\caption{SMSR}
\label{fig21-gs-3}
\end{subfigure}
\caption{Capacity region and Inner bounds for DLSR, DMDR, and SMSR with binary inputs for $\alpha_{s_1}=\alpha_{s_2}=\frac{1}{2}$ and $A_{ne}=0$.}
\label{fig2-gs}
\vspace{-1em}
\end{figure}
It is observed that DMDR with low blocking and DLSR have the same square shaped capacity regions, which indicates that for this parameter setup, these scenarios almost experience orthogonal MACs. These two scenarios have the largest Capacity region and SMSR has the smallest capacity region and the capacity region of DMDR with high blocking is in between.

Fig.~\ref{fig42-gs} shows the maximum achievable equal rates given in \eqref{maxeqachievablerates}, when considering interference as noise, in terms of $A_s$ for DMDR with low and high blocking and DLSR. It is observed that the rate for DLSR is larger than DMDR and reaches to a constant value as $A_s$ increases. Though the constant value is almost the same for DMDR with low blocking and DLSR, the value is higher than that of DMDR with high blocking. The reason is that when considering binary inputs and increasing $A_s$, DMDR with low blocking behaves like DLSR since all receptors become full. However, DMDR with high blocking behaves worse than DLSR since some of the receptors are always blocked.

\begin{figure}
\centering
\begin{minipage}{0.45\textwidth}
\centering
\includegraphics[trim={2.4cm 0 0 1.15cm},scale=0.615]{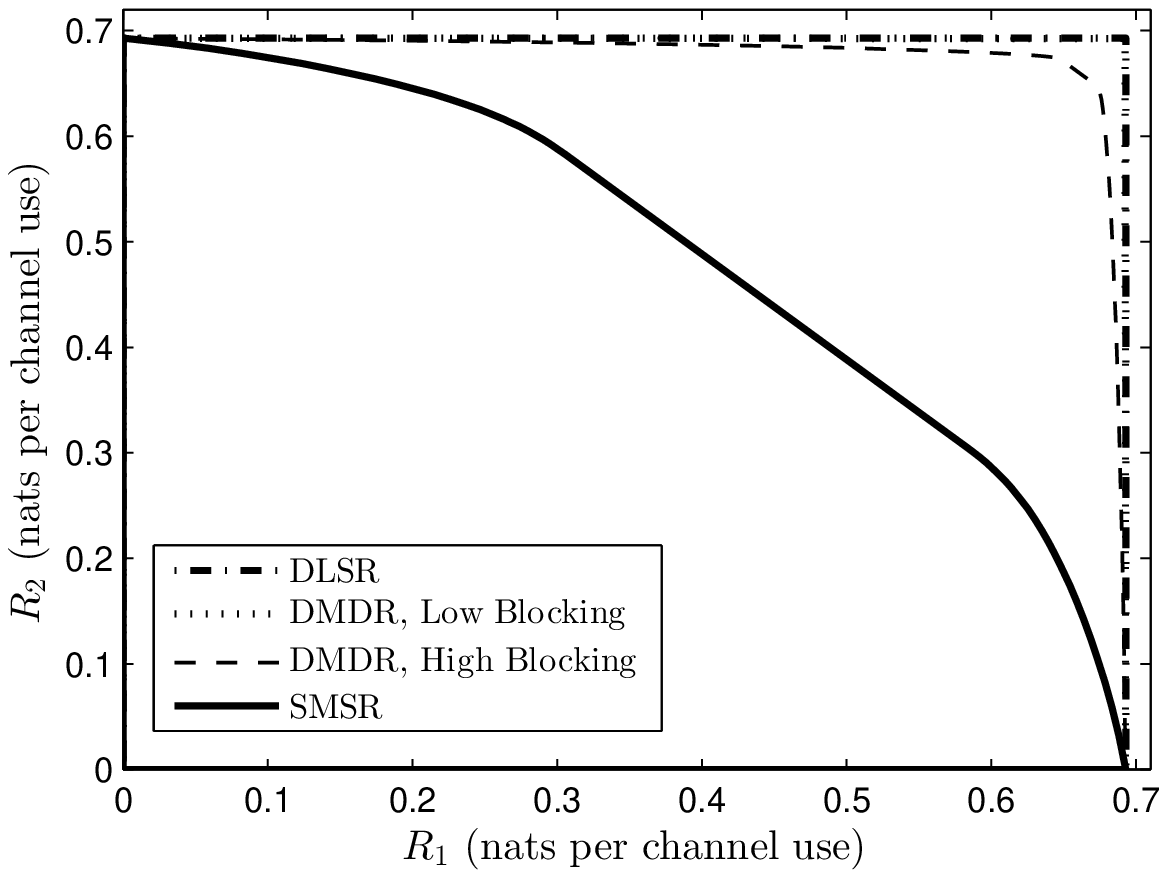}
\caption{Capacity region of DLSR, DMDR, and SMSR with binary inputs for $\alpha_{s_1}=\alpha_{s_2}=\frac{1}{2}$ and $A_{ne}=0$.}
\label{fig5-gs}
\vspace{-1em}
\end{minipage}\qquad %
\begin{minipage}{0.45\textwidth}
\centering
\includegraphics[trim={2.1cm 0 0 0},scale=0.61]{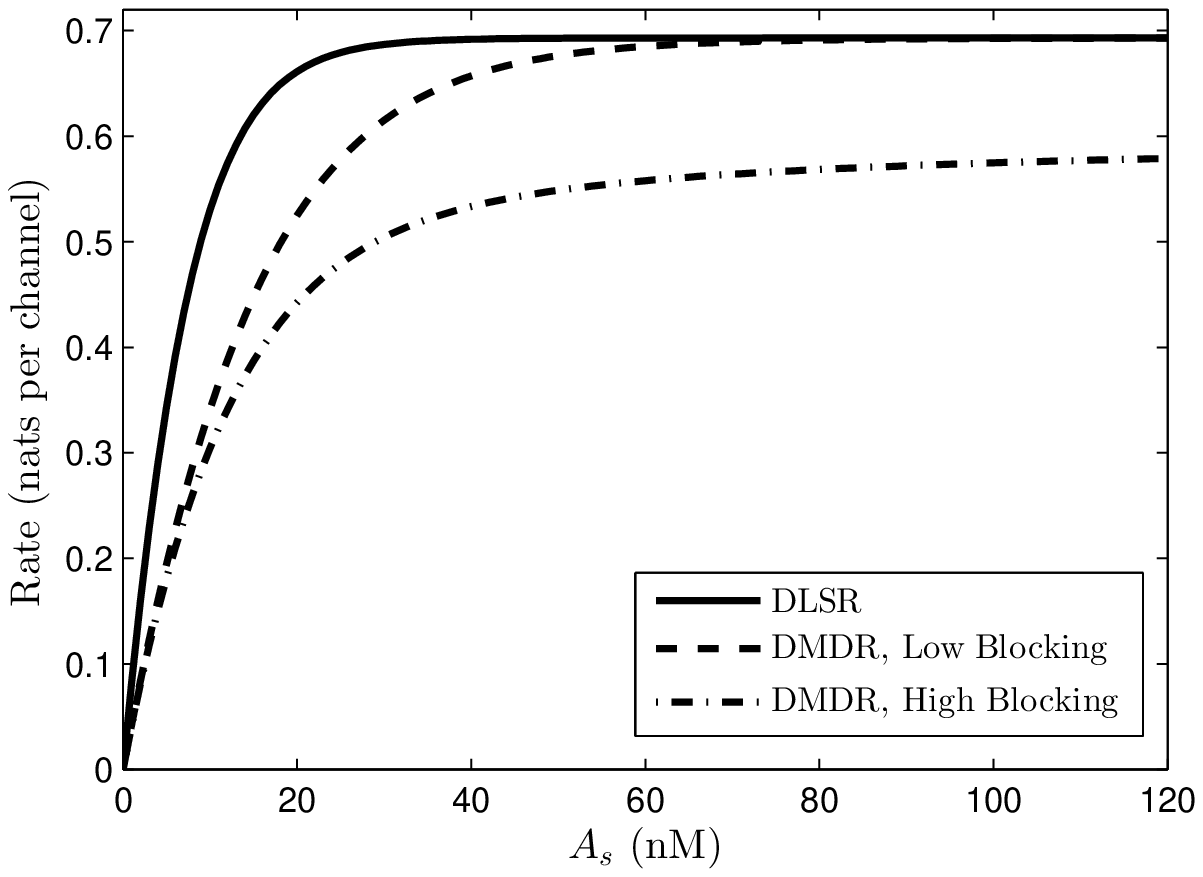}
\caption{Maximum achievable equal rates by viewing interference as noise for DLSR and DMDR with binary inputs for $\alpha_{s_1}=\alpha_{s_2}=\frac{1}{2}$ and $A_{ne}=0$..}
\label{fig42-gs}
\vspace{-1em}
\end{minipage}
\end{figure}

\section{Coclusion}\label{conclusion}
In this paper, we first investigated the capacity performance of point-to-point communication scenarios, including Level and Type scenarios. We also modeled the blocking as a Markov process and derived the probabilities of binding and blocking. Next, we derived a new upper bound on the capacity of the BIC at low SNR-regime based on the KL divergence upper bound as well as a lower bound. As expected and confirmed by simulations, the blocking would decrease the capacity of type scenario. Then we proposed three scenarios for the multiple-access communication, including same types of molecules with Different Labeling and Same types of Receptors (DLSR), Same types of Molecules and Receptors (SMSR), and Different types of Molecules and Receptors (DMDR) scenarios and investigated their capacity region and total capacity. We derived some inner bounds on the capacity region of these scenarios when the environment noise is negligible. Based on numerical results, DLSR outperforms the other scenarios for all values of the maximum signal level from the total capacity point of view. For small values of the maximum signal level, SMSR has better performance than DMDR, whereas for large values of maximum signal level, DMDR has better performance.

\section{Acknowledge}\label{Acknowledge}
The authors would like to thank Dr. Amin Gohari, for his helpful comments.

\bibliographystyle{ieeetr}
\bibliography{reftest}
\appendices

\section{Proof of Theorem \ref{theorem1}}
\label{AppendixProoftheorem1}
We find KL upper bound for the BIC as follows:
\small
\begin{align*}
\nonumber
I(X;Y) &\leq \sum_{x,y} \left[P(x,y)-P(x)P(y)\right] \log{P(y|x)}\\
&=\sum_{x,y} \left[P(x,y) - P(x)P(y)\right] \log{\left({{N^\prime \choose y} f_{p_b}^y(x+A_{ne}) (1-f_{p_b}(x+A_{ne}))^{N^\prime-y}}\right)}\\
&=E_{P(x,y)}\left[\log{{N^\prime \choose y}}\right]-E_{P(x)P(y)}\left[\log{{N^\prime \choose y}}\right]+\E_{P(x,y)}\left[y\log{f_{p_b}(x+A_{ne})}\right]\\
&\quad -\E_{P(x)P(y)}\left[y\log{f_{p_b}(x+A_{ne})}\right]+\E_{P(x,y)}[(N^\prime-y)\log{(1-f_{p_b}(x+A_{ne}))}]\\
&\quad -\E_{P(x)P(y)}\left[(N^\prime-y)\log{(1-f_{p_b}(x+A_{ne}))}\right]\\
&=\E_{P(x,y)}\left[y\log{f_{p_b}(x+A_{ne})}\right]-\E_{P(x)P(y)}\left[y\log{f_{p_b}(x+A_{ne})}\right] \left[\E_{P(x,y)}\left[y\log{(1-f_{p_b}(x+A_{ne}))}\right]\right.\\
&\quad \left.-\E_{P(x)P(y)}\left[y\log{(1-f_{p_b}(x+A_{ne}))}\right]\right]\\
&=\sum_{x}\left(\left(\sum_y y P(y|x)\right)\log{\frac{f_{p_b}(x+A_{ne})}{1-f_{p_b}(x+A_{ne})}}P(x)\right)\\
&\quad-\left(\sum_{x}\left(\sum_y y P(y|x)\right)P(x)\right)\left(\sum_x \log{\frac{f_{p_b}(x+A_{ne})}{1-f_{p_b}(x+A_{ne})}}P(x)\right)\\
&=\E\left[N^\prime f_{p_b}(x+A_{ne})\log{\left(\frac{f_{p_b}(x+A_{ne})}{1-f_{p_b}(x+A_{ne})}\right)}\right]-\E \left[N^\prime f_{p_b}(x+A_{ne})\right]\E \left[\log{\left(\frac{f_{p_b}(x+A_{ne})}{1-f_{p_b}(x+A_{ne})}\right)}\right]\\
&=N^\prime \mathsf{Cov} \left(f_{p_b}(X+A_{ne}), \log{\left(\frac{f_{p_b}(X+A_{ne})}{(1-f_{p_b}(X+A_{ne}))}\right)}\right).
\end{align*}
\normalsize
As mentioned earlier, $f_{p_b}(X+A_{ne})$ is an increasing function. Hence,
$$\mathsf{Cov}{\left(f_{p_b}(X+A_{ne}), \log{ \left(\frac{f_{p_b}(X+A_{ne})}{(1-f_{p_b}(X+A_{ne}))}\right)}\right)}\geq 0$$.
A further observation is that
$$ C \leq \max_{P(x)} N^\prime \mathsf{Cov}{\left(f_{p_b}(X+A_{ne}), \log{\left(\frac{f_{p_b}(X+A_{ne})}{(1-f_{p_b}(X+A_{ne}))}\right)}\right)}$$ is always achievable with a binary random variable $X$. We consider two points, $x_1$ and $x_2$ with probabilities $p_1$ and $p_2$. We have
\begin{align*}
\max_{P(x)} \mathsf{Cov}{(f_{p_b}(X+A_{ne}), \log (F) )}=\max_{\substack{P(x):\\ \E(f_{p_b}(X+A_{ne}))\leq \alpha_s A_s^\prime,\\ 0 \leq X \leq A_s^\prime}}&(\E [f_{p_b}(X+A_{ne})\log{F}]-\E[f_{p_b}(X+A_{ne})]\E[\log{F}])\\
=\max_{\substack{P(x):\\ \E(f_{p_b}(X+A_{ne}))\leq \alpha_s A_s^\prime,\\ 0 \leq X \leq A_s^\prime}}&(\E[(f_{p_b}(X+A_{ne})-\E[f_{p_b}(X+A_{ne})])\log{F}]),
\end{align*}
where $F=\frac{f_{p_b}(X+A_{ne})}{1-f_{p_b}(X+A_{ne})}$. Now, based on the analysis in \cite[Appendix C]{aminian}, the optimal distribution is given by $P(x)=\frac{\alpha_s A_s^\prime}{f_{p_b}(A_s^\prime+A_{ne})}\delta(x-A_s^\prime)+\left(1-\frac{\alpha_s A_s^\prime}{f_{p_b}(A_s^\prime+A_{ne})}\right)\delta(x)$ and the upper bound is obtained as
\begin{align*}
\max_{\alpha_s A_s^\prime \leq f_{p_b}(\alpha_s A_s^\prime+A_{ne})}\frac{\alpha_s A_s^\prime}{f_{p_b}(A_s^\prime+A_{ne})}[f_{p_b}(A_s^\prime+A_{ne})-\alpha_s A_s^\prime]E,
\end{align*}
where $E=\log{\left(\frac{f_{p_b}(A_s^\prime+A_{ne})(1-f_{p_b}(A_{ne}))}{f_{p_b}(A_{ne})(1-f_{p_b}(A_s^\prime+A_{ne}))}\right)}$. The upper bound is equal to $$\frac{f_{p_b}(\alpha_s A_s^\prime+A_{ne})}{f_{p_b}(A_s^\prime+A_{ne})}[f_{p_b}(A_s^\prime+A_{ne})-f_{p_b}(\alpha_s A_s^\prime+A_{ne})]\log{\frac{f_{p_b}(A_s^\prime+A_{ne})(1-f_{p_b}(A_{ne}))}{f_{p_b}(A_{ne})(1-f_{p_b}(A_s^\prime+A_{ne}))}},$$ for $\alpha_s A_s^\prime \leq \frac{f_{p_b}(A_s^\prime+A_{ne})}{2}$ and $\frac{f_{p_b}(A_s^\prime+A_{ne})}{4}\log{\left(\frac{f_{p_b}(A_s^\prime+A_{ne})(1-f_{p_b}(A_{ne}))}{f_{p_b}(A_{ne})(1-f_{p_b}(A_s^\prime+A_{ne}))}\right)}$, otherwise.\\
Now, if we consider $f_{p_b}(X+A_{ne})=\frac{X+A_{ne}}{X+A_{ne}+\frac{\kappa}{\gamma}}$, then the upper bound is:
\begin{align*}
A_{\mathsf{Binomial}}(P(y|x))&:=\max_{\substack{P(x),\\ \E[X]=\alpha_s A_s^\prime,~ 0 \leq X \leq A_s^\prime}}\mathcal{U}(P(x,y))\\
&=N^\prime\begin{cases}\frac{f_{p_b}(\alpha_s A_s^\prime+A_{ne})}{f_{p_b}(A_s^\prime+A_{ne})}[f_{p_b}(A_s^\prime+A_{ne})-f_{p_b}(\alpha_s A_s^\prime+A_{ne})]E,&\textit{if}\quad(*),
\\
\frac{f_{p_b}(A_s^\prime+A_{ne})}{4}E,&\textit{if}\quad (**),
\end{cases}
\end{align*}
where $E=\log{\left(\frac{f_{p_b}(A_s^\prime+A_{ne})(1-f_{p_b}(A_{ne}))}{f_{p_b}(A_{ne})(1-f_{p_b}(A_s^\prime+A_{ne}))}\right)}$, $(*):f_{p_b}(\alpha_s A_s^\prime+A_{ne})< \frac{f_{p_b}(A_s^\prime+A_{ne})}{2}$, and $(**):f_{p_b}(\alpha_s A_s^\prime+A_{ne})\geq \frac{f_{p_b}(A_s^\prime+A_{ne})}{2}$ .

\section{Proof of lemma \ref{lemma1}}
\label{AppendixProoflemma1}
Let
\begin{align*}
p_c=(1-p_b)^{N^\prime}.
\end{align*}
The BIC transition probabilities by considering binary input is characterized as
\begin{align*}
&P(y=0|x=0)=1, \quad P(y=i|x=0)=0,\qquad i=1,...,N^\prime, \\
&P(y=i|x=A_s^\prime)={N^\prime \choose i}p_b^i{\left(1-p_b\right)}^{N^\prime-i},\qquad i=1,...,N^\prime.
\end{align*}
Assume $P(x=A_s^\prime)=\alpha$. The average constraint results in $\alpha \leq \alpha_s$. The lower bound on the BIC capacity without considering the average constraint could be derived as follows:
\begin{align*}
C&=\max_{\alpha} I(X;Y)=\max_{\alpha} H(Y) - H(Y|X)\\
&=\max_{\alpha} H(Y) - P(x=0)H(Y|x=0)-P(x=A_s^\prime)H(Y|x=A_s^\prime)\\
&=\max_{\alpha} -\sum_{i =1}^{N^\prime}\alpha P(y=i|x=A_s^\prime)\log{(\alpha P(y=i|x=A_s^\prime))}\\
& \quad-(1-\alpha+\alpha p_c)\log{(1-\alpha+\alpha p_c)} - \alpha H(Y|x=A_s^\prime)\\
&=\max_{\alpha} -\alpha(1-p_c) \log{\alpha}+\alpha p_c \log{p_c} -(1-\alpha+\alpha p_c)\log{(1-\alpha+\alpha p_c)}.
\end{align*}
Taking a derivative with respect to $\alpha$ from the above expression and setting it to zero we obtain $\alpha^*=\frac{1}{1-p_c+e^{\frac{-p_c \log p_c}{1-p_c}}}$.
Then,
\begin{align*}
C=H{\left(\frac{1}{1+e^{g(p_c)}}\right)}-\frac{g(p_c)}{1+e^{g(p_c)}},
\end{align*}
where $H(p)=-p\log p-(1-p) \log{(1-p)}$ and $g(p)=\frac{H(p)}{1-p}$. Now, if we consider the average constraint, the equation for $C$ is valid for $\alpha^* \leq \alpha_s$ since the mutual information is concave in $\alpha$. But for $\alpha^*>\alpha_s$, the capacity lower bound is obtained for $\alpha=\alpha_s$.

\section{Proof of Lemma \ref{lemma2}}
\label{AppendixProoflemma2}
We prove the lemma for the DLSR scenario. The approach for the DMDR scenario is the same. Let
\begin{align*}
&p_{b_{11}}=p_{b_1}^{DLSR}(x_1=A_{s_1},X_2=x_2), \quad p_{b_{21}}=p_{b_2}^{DLSR}(x_1=A_{s_1},X_2=x_2),\\
&p_{c_1}=P(y_1=0|x_1=A_{s_1},X_2=x_2)=\left(\frac {x_2+\frac{\kappa}{\gamma}}{A_{s_1}+x_2+\frac{\kappa}{\gamma}}\right)^{nN},\\
&p_{c_{10}}=P(y_1=0|x_1=A_{s_1},x_2=0)=\left(\frac{\frac{\kappa}{\gamma}}{A_{s_1}+\frac{\kappa}{\gamma}}\right)^{nN}.
\end{align*}
Channel transition probabilities for the first transmitter by considering binary inputs $x_1 \in \{ 0, A_{s_1} \}$ and $x_2 \in \{ 0, A_{s_2} \}$ are characterized as
\begin{align*}
&P(y_1=0|x_1=0, X_2=x_2)=1,\\
&P(y_1=i|x_1=A_{s_1}, X_2=x_2)=\sum_{j=0}^{nN-i} \binom{nN}{i} \binom{nN-i}{j} p_{b_{11}}^i p_{b_{21}}^j (1-(p_{b_{11}}+p_{b_{21}}))^{nN-i-j}, \qquad i=0,...,nN,
\end{align*}
Assume $P(x_1=A_{s_1})=\alpha_1$. The average constraint for the first transmitter results in $\alpha_1 \leq \alpha_{s_1}$. In the following, the maximum achievable individual rate for the first transmitter, $C_1$, is computed. The approach for computing $C_2$ is the same. Without considering the average constraint we have
\begin{align*}
C_1 &=\max_{\substack{x_2,~\alpha_1}} I(X_1;Y_1|X_2=x_2)=\max_{\substack{x_2,~\alpha_1}} H(Y_1|X_2=x_2)-H(Y_1|X_1,X_2=x_2)\\
&=\max_{\substack{x_2,~\alpha_1}} H(Y_1|X_2=x_2)-P(x_1=0) H(Y_1|x_1=0,X_2=x_2)-P(x_1=A_{s_1}) H(Y_1|x_1=A_{s_1},X_2=x_2)\\
&=\max_{\substack{x_2,~\alpha_1}} -\sum_{i=1}^{nN} \alpha_1 P(y_1=i|x_1=A_{s_1},X_2=x_2) \log{(\alpha_1 P(y_1=i|x_1=A_{s_1},X_2=x_2))}\\
&\quad -(1-\alpha_1+\alpha_1 p_{c_1})\log{(1-\alpha_1+\alpha_1 p_{c_1})}-\alpha_1 H(Y_1|x_1=A_{s_1},X_2=x_2)\\
&=\max_{\substack{x_2,~\alpha_1}} -\alpha_1(1-p_{c_1})\log{\alpha_1}+\alpha_1 p_{c_1} \log{p_{c_1}}-(1-\alpha_1+\alpha_1 p_{c_1})\log{(1-\alpha_1+\alpha_1 p_{c_1})}.
\end{align*}
Taking a derivative with respect to $\alpha_1$ from the above expression and setting it to zero we obtain $\alpha_1^*=\frac{1}{1-p_{c_{1}}+e^{\frac{-p_{c_{1}} \log p_{c_{1}}}{1-p_{c_{1}}}}}$. Then,
\begin{align*}
&C_1=\max_{\substack{x_2}} \left(H\left(\frac{1}{1+e^{g({p_c}_1)}}\right)-\frac{g(p_{c_1})}{1+e^{g(p_{c_1})}}\right),
\end{align*}
where $H(p)=-p\log p-(1-p) \log{(1-p)}, \ g(p)=\frac{H(p)} {1-p}$. Taking a derivative with respect to $x_2$ from the above expression we obtain
\begin{align*}
\frac {d}{d x_2}I_{\alpha_1^*}(X_1;Y|X_2=x_2)=-\frac{p_{c_1}^\prime g^\prime(p_{c_1}) e^{g(p_{c_1})}}{(1+e^{g(p_{c_1})})^2}H^\prime \left(\frac{1}{1+e^{g(p_{c_1})}}\right)-\frac{p_{c_1}^\prime g^\prime(p_{c_1}) (1+e^{g(p_{c_1})}-g(p_{c_1})e^{g(p_{c_1})})}{(1+e^{g(p_{c_1})})^2}.
\end{align*}
Since $H^\prime (p)=\log(\frac{1-p}{p})$, we have
\begin{align*}
\frac {d}{d x_2}I_{\alpha_1^*}(X_1;Y|X_2=x_2)=-\frac{p_{c_1}^\prime g^\prime(p_{c_1}) (g(p_{c_1})e^{g(p_{c_1})}+1+e^{g(p_{c_1})}-g(p_{c_1})e^{g(p_{c_1})})}{(1+e^{g(p_{c_1})})^2}=-\frac{p_{c_1}^\prime g^\prime(p_{c_1}) (1+e^{g(p_{c_1})})}{(1+e^{g(p_{c_1})})^2},
\end{align*}
and this is a negetive value for all $x_2 \geq 0$ since $p_{c_1}^\prime=\frac{nN A_{s_1}}{A_{s_1}+x_2+\frac{\kappa}{\gamma}}{\left(\frac{x_2+\frac{\kappa}{\gamma}}{{A_{s_1}+x_2+\frac{\kappa}{\gamma}}}\right) }^{nN-1}>0$ and  $g^\prime(p_{c_1})=-\frac{p_{c_1}^\prime \log{p_{c_1}}}{(1+e^{g(p_{c_1})})^2}>0$. $x_2$ can take two values $0$ and $A_{s_2}$. So the maximum occurs when $x_2=0$. Hence,
\begin{align*}
&C_1=H{\left(\frac{1}{1+e^{g(p_{c_{10}})}}\right)}-\frac{g(p_{c_{10}})}{1+e^{g(p_{c_{10}})}}.
\end{align*}
Now, we consider the average constraint. For both values of $x_2=0$ and  $x_2=A_{s_2}$, if $\alpha_{s_1} \geq \alpha_1^*(x_2)$, the maximum for $I(X_1;Y_1|x_2=x_2)$ occurs when $\alpha_1=\alpha_1^*(x_2)$ and if $0<\alpha_{s_1}<\alpha_1^*(x_2)$, the maximum occurs when $\alpha_1=\alpha_{s_1}$ since $I(X_1,Y_1|X_2=x_2)$ is concave in $\alpha_1$. Let $\alpha_{10}^*=\alpha_1^*(x_2=0)$ and $\alpha_{11}^*=\alpha_1^*(x_2=A_{s_2})$.
If $\alpha_{s_1} \geq \alpha_{10}^*$ and $\alpha_{s_1} \geq \alpha_{11}^*$, $C_1=\max \{ I_{\alpha_{10}^*}(X_1;Y_1|x_2=0), I_{\alpha_{11}^*}(X_1;Y_1|x_2=A{s_2}) \}$ equals to $I_{\alpha_{10}^*}(X_1;Y_1|x_2=0)$. 
If $\alpha_{s_1} \geq \alpha_{10}^*$ and $\alpha_{s_1}<\alpha_{11}^*$, $C_1=\max \{ I_{\alpha_{10}^*}(X_1;Y_1|x_2=0), I_{\alpha_{s_{1}}}(X_1;Y_1|x_2=A{s_2}) \}$ equals to $I_{\alpha_{10}^*}(X_1;Y_1|x_2=0)$ since $I_{\alpha_{10}^*}(X_1;Y_1|x_2=0) > I_{\alpha_{11}^*}(X_1;Y_1|x_2=A_{s_2}) \geq I_{\alpha_{s_1}}(X_1;Y_1|x_2=A_{s_2})$. 
If $0<\alpha_{s_1}<\alpha_{10}^*$ and $0<\alpha_{s_1}<\alpha_{11}^*$, $C_1=\max \{ I_{\alpha_{s_1}}(X_1;Y_1|x_2=0), I_{\alpha_{s_{1}}}(X_1;Y_1|x_2=A{s_2}) \}$ equals to $I_{\alpha_{s_1}}(X_1;Y_1|x_2=0)$ since $$\frac {d}{d x_2}I(X_1;Y|X_2=x_2)=\alpha_1 p_{c_1}^\prime \log{\frac{\alpha_1 p_{c_1}}{1-\alpha_1+\alpha_1 p_{c_1}}} \leq 0$$ and $I(X_1;Y_1|X_2=x_2)$ is a decreasing function with respect to $x_2$ for all values of $\alpha_1 \in [0,1]$. 
If $0<\alpha_{s_1}<\alpha_{10}^*$ and $\alpha_{s_1} \geq \alpha_{11}^*$, $C_1=\max \{ I_{\alpha_{s_1}}(X_1;Y_1|x_2=0), I_{\alpha_{11}^*}(X_1;Y_1|x_2=A{s_2}) \}$ equals to $I_{\alpha_{s_1}}(X_1;Y_1|x_2=0)$ since $I_{\alpha_{s_1}}(X_1;Y_1|x_2=0)>I_{\alpha_{11}^*}(X_1;Y_1|x_2=0)>I_{\alpha_{11}^*}(X_1;Y_1|x_2=A_{s_2})$.

\section{Proof of Lemma \ref{lemma3}}
\label{AppendixProoflemma3}
We prove the lemma for the DLSR scenario. The approach for the DMDR scenario is the same. Let
\begin{align*}
&p_{b_{11}}=p_{b_1}^{DLSR}(x_1=A_{s_1},X_2=x_2), \quad p_{b_{21}}=p_{b_2}^{DLSR}(x_1=A_{s_1},X_2=x_2),\\
&p_{b_{12}}=p_{b_1}^{DLSR}(X_1=x_1,x_2=A_{s_2}), \quad p_{b_{22}}=p_{b_2}^{DLSR}(X_1=x_1,x_2=A_{s_2}),\\
&p_{c_{10}}=P(y_i=0|x_1=A_{s_1},x_2=0)=\left(\frac{\frac{\kappa}{\gamma}}{A_{s_1}+\frac{\kappa}{\gamma}}\right)^{nN},\\
&p_{c_{11}}=P(y_i=0|x_1=A_{s_1},x_2=A_{s_2})=\left(\frac{A_{s_2}+\frac{\kappa}{\gamma}}{A_{s_i}+A_{s_j}+\frac{\kappa}{\gamma}}\right)^{nN},
\end{align*}
Channel transition probabilities by considering binary inputs $x_1 \in \{ 0, A_{s_1} \}$ and $x_2 \in \{ 0, A_{s_2} \}$ are characterized as
\begin{align*}
&P(y_1=0|x_1=0, X_2=x_2)=P(y_2=0|X_1=x_1, x_2=0)=1,\\
&P(y_1=i|x_1=A_{s_1}, X_2=x_2)=\sum_{j=0}^{nN-i} \binom{nN}{i} \binom{nN-i}{j} p_{b_{11}}^i p_{b_{21}}^j (1-(p_{b_{11}}+p_{b_{21}}))^{nN-i-j}, \qquad i=0,...,nN,\\
&P(y_2=i|X_1=x_1, x_2=A_{s_2})=\sum_{j=0}^{nN-i} \binom{nN}{i} \binom{nN-i}{j} p_{b_{12}}^j p_{b_{22}}^i (1-(p_{b_{12}}+p_{b_{22}}))^{nN-i-j}, \qquad i=0,...,nN.
\end{align*}
Assume $P(x_1=A_{s_1})=\alpha_1$ and $P(x_2=A_{s_2})=\alpha_2$. The average constraints result in $\alpha_1 \leq \alpha_{s_1}$ and $\alpha_2 \leq \alpha_{s_2}$. The interference-as-noise inner bound for this channel is computed as follows:
\begin{align*}
R_1 &< I(X_1;Y_1)= H(Y_1)-H(Y_1|X_1)=H(Y_1)-P(x_1=0)H(Y_1|x_1=0)-P(x_1=A_{s_1})H(Y_1|x_1=A_{s_1})\\
&= -\sum_{i=1}^{nN} \alpha_1((1-\alpha_2) P(y_1=i|x_1=A_{s_1},x_2=0)+\alpha_2 P(y_1=i|x_1=A_{s_1},x_2=A_{s_1}))\\
&\quad \times\log{(\alpha_1((1-\alpha_2) P(y_1=i|x_1=A_{s_1},x_2=0)+\alpha_2 P(y_1=i|x_1=A_{s_1},x_2=A_{s_2})))}\\
&\quad -((1-\alpha_2) ((1-\alpha_1) +\alpha_1 p_{c_{10}})+\alpha_2((1-\alpha_1) +\alpha_1 p_{c_{11}}))\\
&\quad \times \log{((1-\alpha_2) ((1-\alpha_1) +\alpha_1 p_{c_{10}})+\alpha_2((1-\alpha_1) +\alpha_1 p_{c_{11}}))}-\alpha_1 H(Y_1|x_1=A_{s_1})
\end{align*}
\begin{align*}
&=-\alpha_1 \log{\alpha_1}\sum_{i=1}^{nN} ((1-\alpha_2) P(y_1=i|x_1=A_{s_1},x_2=0)+\alpha_2 P(y_1=i|x_1=A_{s_1},x_2=A_{s_2}))\\
&\quad +\alpha_1 (H(Y_1|x_1=A_{s_1})+((1-\alpha_2) p_{c_{10}}+\alpha_2 p_{c_{11}}) \log{((1-\alpha_2) p_{c_{10}}+\alpha_2 p_{c_{11}})})\\
&\quad -((1-\alpha_1)+\alpha_1((1-\alpha_2) p_{c_{10}}+\alpha_2 p_{c_{11}}))\log{((1-\alpha_1)+\alpha_1((1-\alpha_2) p_{c_{10}}+\alpha_2 p_{c_{11}}))}\\
&\quad -\alpha_1 H(Y_1|x_1=A_{s_1})\\
&=-\alpha_1(1-(1-\alpha_2) p_{c_{10}}-\alpha_2 p_{c_{11}})\log{\alpha_1}+\alpha_1((1-\alpha_2) p_{c_{10}}+\alpha_2 p_{c_{11}}) \log{((1-\alpha_2) p_{c_{10}}+\alpha_2 p_{c_{11}})}\\
&\quad -\alpha_1 \left(\frac{1-\alpha_1}{\alpha_1}+(1-\alpha_2) p_{c_{10}}+\alpha_2 p_{c_{11}}\right)\log{\left(\alpha_1 \left(\frac{1-\alpha_1}{\alpha_1}+(1-\alpha_2) p_{c_{10}}+\alpha_2 p_{c_{11}}\right)\right)}\\
&=-\log{\alpha_1}+\alpha_1((1-\alpha_2) p_{c_{10}}+\alpha_2 p_{c_{11}})\log{((1-\alpha_2) p_{c_{10}}+\alpha_2 p_{c_{11}})}\\
&\quad -\alpha_1 \left(\frac{1-\alpha_1}{\alpha_1}+(1-\alpha_2) p_{c_{10}}+\alpha_2 p_{c_{11}}\right)\log{\left(\frac{1-\alpha_1}{\alpha_1}+(1-\alpha_2) p_{c_{10}}+\alpha_2 p_{c_{11}}\right)}.
\end{align*}
for some $\alpha_1 \in [0,\alpha_{s_1}]$ and $\alpha_2 \in [0, \alpha_{s_2}]$. With the same approach for $R_2$ we have
\begin{align*}
R_2 &< -\log{\alpha_2}+\alpha_2((1-\alpha_1) p_{c_{20}}+\alpha_1 p_{c_{21}})\log{((1-\alpha_1) p_{c_{20}}+\alpha_1 p_{c_{21}})}\\
&\quad -\alpha_2 \left(\frac{1-\alpha_2}{\alpha_2}+(1-\alpha_1) p_{c_{20}}+\alpha_1 p_{c_{21}}\right)\log{\left(\frac{1-\alpha_2}{\alpha_2}+(1-\alpha_1) p_{c_{20}}+\alpha_1 p_{c_{21}}\right)},
\end{align*}
For $A_{s_1}=A_{s_2}=A_s$, we have $p_{c_{10}}=p_{c_{20}}$ and $p_{c_{11}}=p_{c_{21}}$. The points where $R_1=R_2$, without considering the average constraints, are as follows:
\begin{align*}
R_1=R_2&=k \max_{\substack{\alpha}}-\log{\alpha}+\alpha((1-\alpha) p_{c_{10}}+\alpha p_{c_{11}}) \log{((1-\alpha) p_{c_{10}}+\alpha p_{c_{11}})}\\
&\quad -\alpha\left(\frac{1-\alpha}{\alpha}+(1-\alpha) p_{c_{10}}+\alpha p_{c_{11}}\right)\log{\left(\frac{1-\alpha}{\alpha}+(1-\alpha) p_{c_{10}}+\alpha p_{c_{11}}\right)},
\end{align*}
for some $k \in [0,1]$. Taking a derivative with respect to $\alpha$ from the above expression and setting it to zero we obtain
\begin{align*}
&((1-2 \alpha)p_{c_{10}}+2\alpha p_{c_{11}})\log{((1-\alpha) p_{c_{10}}+\alpha p_{c_{11}})}\\
&\quad-((1-2\alpha)p_{c_{10}}+2\alpha p_{c_{11}}-1)\log{\left(\frac{1-\alpha}{\alpha}+(1-\alpha) p_{c_{10}}+\alpha p_{c_{11}}\right)}=0.
\end{align*}
If we consider the average constraints with $\alpha_{s_1}=\alpha_{s_2}=\alpha_s$, the above equation for the optimum value of $\alpha$ is valid if the solution of the equation is lower than or equal to $\alpha_s$ since $I(X_1;Y_1)$ for $\alpha_1=\alpha_2=\alpha$ is concave in $\alpha$. If the solution is higher than $\alpha_s$, the maximum occurs when $\alpha=\alpha_s$.
\section{Proof of Lemma \ref{lemma8}}
\label{AppendixProoflemma8}
Let
\begin{align*}
&{p_b}_{10}=p_b^{SMSR}(x_1=A_{s_1},x_2=0), \quad {p_b}_{01}=p_b^{SMSR}(x_1=0,x_2=A_{s_2}),
\end{align*}
\begin{align*}
&p_{b_{11}}=p_b^{SMSR}(x_1=A_{s_1},x_2=A_{s_2}),\\
&p_{c_{10}}=P(y=0|x_1=A_{s_1},x_2=0)=\left(\frac{\frac{\kappa}{\gamma}}{A_{s_1}+\frac{\kappa}{\gamma}}\right)^{nN},
\end{align*}
Channel transition probabilities by considering binary inputs $x_1 \in \{ 0, A_{s_1} \}$ and $x_2 \in \{ 0, A_{s_2} \}$ are characterized as
\begin{align*}
&P(y=0|x_1=0,x_2=0)=1,\\
&P(y=i|x_1=A_{s_1}, x_2=0)=\binom{nN}{i} {{p_b}_{10}}^i {(1-{p_b}_{10})}^{nN-i}, \qquad i=0,...,nN,\\
&P(y=i|x_1=0, x_2=A_{s_2})=\binom{nN}{i} {{p_b}_{01}}^i {(1-{p_b}_{01})}^{nN-i}, \qquad i=0,...,nN,\\
&P(y=i|x_1=A_{s_1}, x_2=A_{s_2})=\binom{nN}{i} {p_{b_{11}}}^i {(1-p_{b_{11}})}^{nN-i}, \qquad i=0,...,nN.
\end{align*}
Assume $P(x_1=A_{s_1})=\alpha_1$. The average constraint for the first transmitter results in $\alpha_1 \leq \alpha_{s_1}$. In the following, the maximum achievable individual rate for the first transmitter, $C_1$, is computed. The approach for computing $C_2$ is the same. Without considering the average constraint we have
\begin{align*}
C_1 &=\max_{\substack{x_2,~\alpha_1}} I(X_1;Y|X_2=x_2)=\max_{\substack{x_2,~\alpha_1}} H(Y|X_2=x_2)-H(Y|X_1,X_2=x_2)\\
&=\max_{\substack{x_2,~\alpha_1}} H(Y|X_2=x_2)-P(x_1=0) H(Y|x_1=0,X_2=x_2)-P(x_1=A_{s_1}) H(Y|x_1=A_{s_1},X_2=x_2)\\
&=\max_{\substack{x_2,~\alpha_1}} -\sum_{i=0}^{N} P(y=i|X_2=x_2) \log {P(y=i|X_2=x_2)} -(1-\alpha_1) H(Y|x_1=0,X_2=x_2)\\
&\quad -\alpha_1 H(Y|x_1=A_{s_1},X_2=x_2).
\end{align*}
We can write this as follows:
\begin{align*}
&C_1=\max \{c_{10},c_{11}\},\\ &c_{10}=\max_{\substack{\alpha_1}} I(X_1;Y|x_2=0),\quad c_{11}=\max_{\substack{\alpha_1}} I(X_1;Y|x_2=A_{s_2}).
\end{align*}
For $c_{10}$ we have
\begin{align*}
c_{10}&=\max_{\substack{\alpha_1}} -\sum_{i=0}^{nN} P(y=i|x_2=0) \log {P(y=i|x_2=0)}-\alpha_1 H(Y|x_1=A_{s_1},x_2=0)\\
&=\max_{\substack{\alpha_1}} -\sum_{i=1}^{nN} \alpha_1 P(y=i|x_1=A_{s_1},x_2=0) \log {(\alpha_1 P(y=i|x_1=A_{s_1},x_2=0))}\\
&\quad -(1-\alpha_1+\alpha_1 p_{c_{10}}) \log{(1-\alpha_1+\alpha_1 p_{c_{10}})}-\alpha_1 H(Y|x_1=A_{s_1},x_2=0)\\
&= \max_{\substack{\alpha_1}} -\alpha_1(1-p_{c_{10}}) \log{\alpha_1}+\alpha_1 p_{c_{10}} \log{p_{c_{10}}}-(1-\alpha_1+\alpha_1 p_{c_{10}}) \log{(1-\alpha_1+\alpha_1 p_{c_{10}})}.
\end{align*}
Taking a derivative with respect to $\alpha_1$ from the above expression and setting it to zero we obtain
$\alpha_{10}^*=\frac{1}{1-p_{c_{10}}+e^{-\frac{p_{c_{10}} \log{p_{c_{10}}}}{1-p_{c_{10}}}}}$,
Then,
\begin{align*}
c_{10}= H{\left(\frac{1}{1+e^{g(p_{c_{10}})}}\right)}-\frac{g(p_{c_{10}})}{1+e^{g(p_{c_{10}})}},
\end{align*}
where $H(p)=-p\log p-(1-p) \log{(1-p)}, \ g(p)=\frac{H(p)} {1-p}$. If we consider the average constraint, the above equation for $c_{10}$ is valid if $\alpha_{s_1} \geq \alpha_{10}^*$ since $I(X_1;Y_1|x_2=0)$ is concave in $\alpha_1$. If $0 <\alpha_{s_1}<\alpha_{10}^*$, the maximum occurs when $\alpha_1=\alpha_{s_1}$.
For $c_{11}$ we have
\begin{align*}
c_{11}&=\max_{\substack{\alpha_1}} I(X_1;Y|x_2=A_{s_2})=\max_{\substack{\alpha_1}} -\sum_{i=0}^{nN} P(y=i|x_2=A_{s_2}) \log {P(y=i|x_2=A_{s_2})} \\
&\quad -(1-\alpha_1) H(Y|x_1=0,x_2=A_{s_2})-\alpha_1 H(Y|x_1=A_{s_1},x_2=A_{s_2})\\
&=\max_{\substack{\alpha_1}}-\sum_{i=0}^{nN} ((1-\alpha_1) P(y=i|x_1=0,x_2=A_{s_2})+\alpha_1 P(y=i|x_1=A_{s_1},x_2=A_{s_2}))\\
&\quad \times \log{((1-\alpha_1) P(y=i|x_1=0,x_2=A_{s_2})+\alpha_1 P(y=i|x_1=A_{s_1},x_2=A_{s_2}))}\\
&\quad -(1-\alpha_1) H(Y|x_1=0,x_2=A_{s_2})-\alpha_1 H(Y|x_1=A_{s_1},x_2=A_{s_2})\\
&=\max_{\substack{\alpha_1}}-\sum_{i=0}^{nN} \left[(1-\alpha_1) P(y=i|x_1=0,x_2=A_{s_2})\log{\left((1-\alpha_1)+\alpha_1 \frac{P(y=i|x_1=A_{s_1},x_2=A_{s_2})}{P(y=i|x_1=0,x_2=A_{s_2})}\right)}\right.\\
&\quad \left.+\alpha_1 P(y=i|x_1=A_{s_1},x_2=A_{s_2}) \log{\left((1-\alpha_1) \frac{P(y=i|x_1=0,x_2=A_{s_2})}{P(y=i|x_1=A_{s_1},x_2=A_{s_2})}+\alpha_1 \right)}\right].
\end{align*}
Taking a derivative with respect to $\alpha_1$ from the above expression and setting it to zero we obtain
\begin{align*}
\sum_{i=0}^{nN} &\left[P(y=i|x_1=0,x_2=A_{s_2})\log{\left((1-\alpha_1)+\alpha_1 \frac{P(y=i|x_1=A_{s_1},x_2=A_{s_2})}{P(y=i|x_1=0,x_2=A_{s_2})}\right)}\right.\\
&\quad \left.-P(y=i|x_1=A_{s_1},x_2=A_{s_2}) \log{\left((1-\alpha_1) \frac{P(y=i|x_1=0,x_2=A_{s_2})}{P(y=i|x_1=A_{s_1},x_2=A_{s_2})}+\alpha_1 \right)}\right]=0.
\end{align*}
If we consider the average constraint, the above equation for the optimum value of $\alpha_1$ is valid when the solution of the equation is lower than or equal to $\alpha_{s_1}$ since $I(X_1;Y_1|x_2=A_{s_2})$ is concave in $\alpha_1$. If the solution is higher than $\alpha_{s_1}$, the maximum occurs when $\alpha_1=\alpha_{s_1}$.
\end{document}